 \documentclass[anon,12pt]{clear2025} 

\usepackage{times} 

\usepackage{mathtools}
\usepackage{bbm}
\usepackage{booktabs}
\usepackage{multirow}
\usepackage{subcaption}
\usepackage{wrapfig}
\usepackage{xcolor}
\usepackage{tcolorbox}
\usepackage{float}

\newtheorem{consequence}[theorem]{Consequence}

\newtheorem{innercustomthm}{Theorem}
\newenvironment{customthm}[1]
  {\renewcommand\theinnercustomthm{#1}\innercustomthm}
  {\endinnercustomthm}

\newtheorem{innercustomlem}{Lemma}
\newenvironment{customlem}[1]
  {\renewcommand\theinnercustomlem{#1}\innercustomlem}
  {\endinnercustomlem}

\newcommand{\indep}{\perp\!\!\!\perp}

\clearauthor{
  \Name{Juraj Bodik} \Email{juraj.bodik@unil.ch}\\
  \addr HEC Lausanne, University of Lausanne, Switzerland
}

\title[Retrospective Counterfactual Prediction by Conditioning on the Factual Outcome]{Retrospective Counterfactual Prediction by Conditioning on the Factual Outcome: A Cross-World Approach}

\begin{document}
\maketitle

\begin{abstract}%
Retrospective causal questions ask what would have happened to an observed individual had they received a different treatment. We study the problem of estimating $\mu(x,y)=\mathbb{E}[Y(1)\mid X=x,Y(0)=y]$, the expected counterfactual outcome for an individual with covariates $x$ and observed outcome $y$, and constructing valid prediction intervals under the Neyman–Rubin superpopulation model. This quantity is generally not identified without additional assumptions. To link the observed and unobserved potential outcomes, we work with a cross-world correlation  $\rho(x)=\operatorname{cor}(Y(1),Y(0)\mid X=x)$; plausible bounds on $\rho(x)$ enable a principled approach to this otherwise unidentified problem. We introduce retrospective counterfactual  estimators $\hat{\mu}_{\rho}(x,y)$ and prediction intervals $C_{\rho}(x,y)$ that asymptotically satisfy $P[Y(1)\in C_{\rho}(x,y)\mid X=x, Y(0)=y]\ge1-\alpha$ under standard causal assumptions. Many common baselines implicitly correspond to endpoint choices  $\rho=0$ or $\rho=1$ (ignoring the factual outcome or treating the counterfactual as a shifted factual outcome). Interpolating between these cases through cross-world dependence yields substantial gains in both theory and practice. 
\end{abstract}

\begin{keywords}%
Causal inference; Counterfactual prediction; Cross-world assumptions; Prediction intervals%
\end{keywords}

\section{Introduction}
At its core, causal inference pursues two goals: assessing what would have happened to an individual under an alternative treatment, and predicting how a new individual will benefit from treatment  \citep{rubin2005causal}. For answering the second goal, the literature focuses on average treatment effects (ATE) or conditional average treatment effects (CATE). However, estimating retrospective counterfactuals (first goal) is often more challenging, as it requires more untestable assumptions, connected to the Pearl’s third ladder of causation \citep{TheBookOfWhy}.  Estimates of counterfactuals are critical in many fields: in medicine, they enable evaluating how a patient might have responded to a different treatment \citep{Imbens_Rubin_book}; in criminal law, they underpin the “but‑for” test of causation, which assesses liability based on whether harm would have occurred absent the defendant’s action \citep{wright1985causation}.

Consider a medical scenario in which a patient, Alex, arrives at a hospital with covariates $X = x$ (e.g., age, weight, and other characteristics), does not receive the treatment ($T = 0$), and experiences an outcome $Y(0) \in \mathbb{R}$. Estimating his retrospective counterfactual outcome  $Y(1)$ is central to causal reasoning, when accountability demands a clear assessment of whether the treatment decision aligned with what would likely have led to a better outcome. In high-stakes settings such as healthcare, it is equally important to quantify the uncertainty in individual treatment effects (ITEs); that is, to construct a set $C \subseteq \mathbb{R}$ that contains $Y(1)$ with high probability.

Existing methods primarily focus on estimating the CATE, defined as \( \tau(x) = \mu_1(x) - \mu_0(x) \), where \( \mu_t(x) = \mathbb{E}[Y(t) \mid X = x] \) for \( t=0,1 \) can be estimated via e.g. random forest \citep{Wager_2018_Estimation_Inference}. The missing counterfactual is often imputed either by \( \hat{Y}(1) = Y(0) + \hat{\tau}(X) \), by \( \hat{Y}(1) = \hat{\mu}_1(X) \), or through a matching-based approach. Some notable exceptions are presented in Section~\ref{section_preliminaries} and Appendix~\ref{appendix_methods}.

Most existing approaches condition only on covariates $X$, overlooking the observed (factual) outcome $Y(0)$, which often contains valuable individual-specific information. For instance, if Alex left the hospital healthy after not receiving treatment ($Y(0)$ is large), it is highly likely that he would also be healthy under the counterfactual scenario in which he received treatment ($Y(1)$ is large). Incorporating the factual outcome alongside the covariates can therefore refine individual-level predictions and improve the accuracy of estimated counterfactuals. 

In this work, we propose leveraging covariates \textit{and} the factual outcome to enhance counterfactual prediction. Specifically, instead of estimating $\mathbb{E}[Y(1) \mid X = x]$, we aim to construct point estimates
\vspace{-0.25cm}
\begin{align}
\label{eq_point_goal}
& \hat{\mu}_{\rho}(x, y) \quad \text{for} \quad \mathbb{E}\left[Y(1) \mid X = x, Y(0) = y \right], 
\end{align}
and prediction intervals \( C_\rho(x, y) \) for the counterfactuals satisfying (at least asymptotically):
\begin{equation}
\label{eq_coverage}
P\left(Y(1) \in C_\rho(x,y) \mid X=x, Y(0)=y\right) \geq 1 - \alpha,
\end{equation}
for $\alpha\in(0,1)$ (typically $\alpha=0.1$). By leveraging the factual outcome in conditioning, we obtain more  informative and reliable individual-level counterfactual predictions than methods that rely solely on covariates.

However, conditioning on the factual outcome introduces a fundamental challenge: since both potential outcomes are never observed for the same individual, the joint distribution of \( \big(Y(0), Y(1)\big) \) is unidentifiable without further assumptions. To address this, we adopt a class of assumptions known as cross-world assumptions.

\begin{definition}[\cite{bodik2025crossworld}]
In the Neyman–Rubin super-population model with i.i.d. units, the dependence structure (conditional correlation) between the potential outcomes $Y(1), Y(0)$, conditioned on the observed covariates $X$, is defined as:  
\[
\rho(x) = \operatorname{cor}\big(Y(1), Y(0) \mid X = x\big).
\]
We refer to assumptions about the dependence between $Y(1)$ and $Y(0)$ as \textit{cross-world assumption}s.
\end{definition}
Related notions have appeared in prior literature (see Section~\ref{section_preliminaries}), often represented via an additive structural equation model:
\[
Y(0) = \mu_0(X) + \varepsilon_0, \quad Y(1) = \mu_1(X) + \varepsilon_1, \quad \text{ where }\operatorname{cor}(\varepsilon_1, \varepsilon_0) = \rho(X).
\]
Although \( \rho \) is not identifiable from data, postulating plausible values or bounds from domain experts is often both feasible and well-aligned with how humans make judgments. Observing one potential outcome often conveys information about the other, beyond what is captured by covariates. In some settings, \( \rho \) can even be estimated or bounded more sharply when auxiliary variables are available, as these may reveal additional structure shared across the two potential outcomes.

While specifying $\rho$ may be difficult in some applications, we show (Lemma \ref{lemma_nonidentifiability}) that point-identifying $\mathbb{E}[Y(1)\mid X=x, Y(0)=y]$ requires fully specifying the copula linking $Y(1)$ and $Y(0)$. For clarity, we adopt a Gaussian copula parametrized by $\rho$. This choice could be replaced by any fixed copula family, and the same reasoning applies, although doing so would require more detailed practitioner input regarding the exact form of cross-world dependence. Identifying a credible dependence structure is therefore the central and practically consequential challenge in retrospective counterfactual inference.
\paragraph{Our main contributions:}
\begin{itemize}
\item We introduce cross-world dependence as a tool for retrospective counterfactual prediction.
\item We provide the first systematic treatment of the estimands (\ref{eq_point_goal}) and (\ref{eq_coverage}).
\item Assuming a fully specified cross-world dependence, we propose Retrospective Counterfactual Prediction framework ($RCP(\rho)$): a consistent estimator of (\ref{eq_point_goal}) and prediction intervals that asymptotically satisfy (\ref{eq_coverage}) under standard causal assumptions. 
\item We conduct an extensive empirical study showing that conditioning on the factual outcome can substantially improve accuracy and reliability relative to existing approaches.
\end{itemize}
For clarity, we mainly focus on the case where \(Y(0)\) is observed (\(T=0)\) and \(Y(1)\) is the counterfactual; the opposite case is analogous.

\section{Preliminaries, related work and cross-world assumption}
\label{section_preliminaries}

We adopt the Neyman-Rubin potential outcomes framework \citep{rubin2005causal}, where each unit~\( i \) has potential outcomes \( Y_i(1) \) and \( Y_i(0) \), covariates \( X_i \in \mathcal{X} \subseteq \mathbb{R}^d \), and treatment assignment \( T_i \in \{0,1\} \). The observed outcome is \( Y_i =  T_i Y_i(1) + (1-T_i) Y_i(0) \in \mathcal{Y} \subseteq \mathbb{R} \), while the  \(ITE_i= Y_i(1) - Y_i(0) \) remains unobservable. We assume
$(Y_i(1), Y_i(0), T_i, X_i) \overset{\text{i.i.d.}}{\sim} (Y(1), Y(0), T, X),$
for a generic random vector \((Y(1), Y(0), T, X)\) with distribution $P$. The conditional average treatment effect (CATE) is defined as 
$
\tau(x) = \mu_1(x) - \mu_0(x)$ with \(\mu_t(x) = \mathbb{E}[Y(t) \mid X=x].\)

We impose \textbf{strong ignorability} and \textbf{overlap}, meaning \( (Y(1), Y(0)) \perp\!\!\!\perp T \mid X \) and \( 0 < \pi(x) < 1 \) for all \( x \in \mathcal{X} \), where \( \pi(x) = \mathbb{P}(T=1 \mid X=x) \) denotes the propensity score. This ensures that treatment is as-if randomly assigned given covariates and that both treatments are feasible. Under these assumptions, CATE is identified via \( \mu_t(x) = \mathbb{E}[Y \mid T=t, X=x] \); see \cite{Wager_2024_Book_Causal}. 

We note that some authors use the terms “ITE” and “CATE” interchangeably, which can lead to confusion. Here, ITE is a latent, unit-specific quantity, while the CATE is an unknown function, defined as the conditional expectation of the ITE given covariates.

\subsection{Related work: cross-world assumption}

In the potential outcomes framework, the joint distribution of $\big(Y(1),Y(0)\big)\mid X$ is unidentifiable because only one potential outcome is observed per unit. While CATE can be identified without assumptions on this joint law, quantities such as variance, quantiles, or prediction intervals of ITE generally depend on the cross-world correlation $\rho(X) = \operatorname{cor}\big(Y(1), Y(0) \mid X\big)$ \citep{Rubin1990, ding2019decomposing}. This has been studied in joint distribution modeling \citep{heckman1997making, FanPark2010}, quantile treatment effect estimation \citep{firpo2007efficient}, nonparametric bounds using copulas \citep{zhang2025boundsdistributionsumrandom, zhang2025individualtreatmenteffectprediction,Nelsen31052001} and optimal transport \citep{dance2025counterfactualcocyclesframeworkrobust}. \cite{AndrewsDidelez2021} highlight the implausibility of cross-world independence assumptions in mediation analysis. We complement these by parameterizing cross-world dependence via~$\rho(x)$.

\cite{bodik2025crossworld} and \cite{cai2022relatepotentialoutcomesestimating} argue that in many real-world applications $\rho$ is non-negative and often substantially positive due to shared latent factors affecting both potential outcomes. This is also supported empirically: across over 30 real-world datasets, the cross-world correlation was always positive and was typically above $0.5$. Formally, consider a model where $Y(1) = \mu_1(X) + H + \tilde{\varepsilon}_1$ and $Y(0) = \mu_0(X) + H + \tilde{\varepsilon}_0$, where $X \in \mathbb{R}^d$ are observed covariates, $H \indep (X,T)$ is an unobserved factor influencing both potential outcomes, and $\tilde{\varepsilon}_0 \indep \tilde{\varepsilon}_1$ are idiosyncratic noise terms. Conditioning on $X$, it is easy to derive that $\rho(X) = \operatorname{cor}(Y(1), Y(0) \mid X) = \frac{\operatorname{var}(H)}{\sqrt{\operatorname{var}(\tilde{\varepsilon}_0)\operatorname{var}(\tilde{\varepsilon}_1)}} \ge 0$. The shared influence of $H$ induces strictly positive correlation between $Y(1)$ and $Y(0)$, even after adjusting for $X$. Moreover, if the treatment has no or very small effect, then $Y(1)\approx Y(0)$ and hence $\rho \approx 1$.

\textbf{Estimating $\rho$ using auxiliary data.} \cite{bodik2025crossworld} showed that $\rho(x)$ can be estimated from data when richer auxiliary variables $Z$ are available for a subset of units. For example, in a clinical setting, a subset of patients may have rich auxiliary measurements such as detailed medical history, lab panels, or genetic markers, whereas for most patients only a small set of routine covariates was measured. The presence of $Z$ allows us to construct informative, data-driven \emph{bounds} on $\rho(x)$. The authors proposed a consistent estimator for the bounds, and showed e.g. on the Twins dataset (we return to this dataset in Section~\ref{section_simulations}), that the cross-world correlation lies in \(\rho(x)\in(0.5, 0.7)\). 

\textbf{How to choose $\rho$ without auxiliary data?} If auxiliary variables are not available, \(\rho(x)\) should be guided by domain knowledge and viewed as a \emph{sensitivity parameter}. In this case, the goal is not to select a single “correct” value, but to examine how the conclusions vary over a range of plausible values of \(\rho(x)\). Practitioners should therefore ask: how much residual dependence between \(Y(0)\) and \(Y(1)\), beyond what is explained by \(X\), is scientifically credible in the application at hand? Equivalently, what values are plausible for $\frac{\operatorname{var}(\text{shared latent effects})}{\operatorname{var}(\text{idiosyncratic noise})}$? A useful strategy is to report results over a grid of values, or under lower and upper bounds, and assess whether the main qualitative conclusions remain stable. In this sense, \(\rho(x)\) plays the same role as a standard sensitivity parameter: it encodes untestable cross-world dependence, and robustness is demonstrated when the substantive findings do not materially change across a reasonable range of specifications.

In many complex systems, it is reasonable to expect a substantial contribution from shared latent components, suggesting that \(\rho(x)\) is often positive and sometimes fairly large. At the same time, \(\rho(x)\) is rarely close to \(1\), since treatment effects generally exhibit heterogeneity even among individuals with the same observed covariates \(X\). This is not a universal rule, but rather a practical guideline for constructing a plausible sensitivity range, grounded in the idea that latent common causes in many real-world systems influence both \(Y(0)\) and \(Y(1)\).

As an example, consider a clinical trial testing a new drug for reducing blood pressure, where treatment is randomly assigned. Let \(Y_i(1)\) denote patient \(i\)'s blood pressure after receiving the drug and \(Y_i(0)\) after receiving a placebo. Even though baseline covariates such as age, weight, and pre-existing conditions are observed, unmeasured factors such as genetic predisposition can strongly influence both potential outcomes. A patient with naturally resilient cardiovascular health will likely exhibit relatively low blood pressure regardless of treatment, whereas a patient with severe underlying issues will tend to have higher readings in both scenarios. These persistent latent traits induce positive dependence between \(Y_i(1)\) and \(Y_i(0)\) even after adjusting for a few observed covariates. Thus, in this setting, a sensitivity analysis centered on positive and moderately large values of \(\rho(x)\) would often be scientifically more credible than one centered near zero.

\subsection{Related work: retrospective counterfactuals for in-study units}
Inferring individual counterfactual outcomes is fundamentally a missing data problem \citep{ding2018causal}.  
Many methods for counterfactual prediction use CATE-adjusted imputation \(\hat{Y}_i(1) = Y_i(0) + \hat{\tau}(X_i)\), where $\hat{\tau}$ is estimated using doubly-robust estimator, random forests or S/T-learner \citep{Wager_2024_Book_Causal, Kunzel_2019_Metalearners_Estimating, Athey_2019_Generalized_Random}.  Other approaches directly model the treated outcome as \(\hat{Y}_i(1) = \hat{\mu}_1(X_i)\), thereby ignoring information contained in the observed outcome \(Y_i(0)\)  (possibly using control group only for the propensity estimation, \cite{Lei_2021_Conformal_Inference}).  

Classic counterfactual prediction methods target $\mathbb{E}[Y(T)\mid X]$ without conditioning on $Y(0)$. For instance, \citet{KimKennedyZubizarreta2022} propose a doubly robust estimator for counterfactual classification that directly models the treated outcome distribution, and \citet{McCleanBransonKennedy2024} develop nonparametric estimators for conditional incremental effects (based on stochastic propensity interventions) with a similar goal of directly estimating $\mathbb{E}[Y(1)\mid X]$. More recently, \citet{Kim2025} introduces a semiparametric counterfactual regression framework that likewise estimates $\mathbb{E}[Y(1)\mid X]$ using flexible machine learning and \cite{ma2024DiffPO} using diffusion models. Like many modern approaches, these methods aim to improve recovery of treatment effects from covariates \(X\), but they do not explicitly exploit the realized factual outcome \(Y_i(0)\) when predicting the missing counterfactual \(Y_i(1)\). These approaches forego individual-level imputation using $Y(0)$, instead relying on robust modeling of the treated outcome. Most existing methods focus on minimizing the Precision in Estimation of Heterogeneous Effects (PEHE), defined as \(
\mathbb{E}_{X}  \big( \hat\tau(X) - \tau(X) \big)^2,
\)
which targets CATE recovery. However, optimizing PEHE is not well suited for inference about counterfactuals. 

There are a few notable exceptions where the construction of \(\hat{Y}_i(1)\) follows a different principle. \textbf{Adversarial approaches}: \citet{yoon2018ganite} introduce GANITE, which employs adversarial training to generate \(\hat{Y}_i(1)\). Although GANITE innovatively bypasses strict model assumptions, it focuses on PEHE and relies on black-box adversarial neural networks without explicitly modeling the joint distribution of potential outcomes. It typically performs well with large dimensions but poorly with small ones. 
 \textbf{Bayesian causal inference}: Missing counterfactuals are treated as latent variables, and uncertainty is integrated through the posterior distribution. For example, \citet{Alaa_gaussian_bayes} propose a Bayesian multitask Gaussian process to jointly model \(\big(Y(1), Y(0)\big) \mid X\), producing posterior distributions over the potential outcomes. While Bayesian methods offer coherent uncertainty quantification, they rely on strong modeling assumptions and can be sensitive to prior specifications \citep{li2022bayesiancausalinferencecritical}. Moreover, they can be restrictive when aiming to leverage flexible modern machine learning techniques.
\textbf{Matching methods}: Matching-based approaches \citep{hur2024convexifiedmatchingapproachimputation} estimate counterfactual outcomes by pairing individuals \(i, j\) with similar covariates but different treatments, and approximating the ITE as \(Y_j(1) - Y_i(0)\). However, this construction implicitly assumes independence between the potential outcomes (\(\rho = 0\)). To our knowledge, existing matching methods do not incorporate matching mechanisms that depend directly on the value of~\(Y_i\). 

More detailed literature review can be found in Appendix~\ref{appendix_methods}.

\section{Constructing Retrospective Counterfactual Estimates under Cross-World Assumptions }
\label{section_3}
Our goal is to construct a point estimate and prediction interval for the counterfactual outcome. If both $Y_i(1)$ and $Y_i(0)$ were observable for some individuals, the problem would reduce to classical regression with the factual outcome as an additional covariate. Since this is not possible, inferring counterfactual outcomes remains fundamentally challenging. 

\begin{lemma}[Non-identifiability of $\mu(x,y)$]\label{lemma_nonidentifiability}
Consider two non-degenerate distributions $P$ and $P'$ satisfying strong ignorability and overlap, that share the same marginal laws of $Y(0)\mid X$ and $Y(1)\mid X$ but differ in their
cross-world dependence structure; that is,  $\rho(x)\neq \rho'(x)$. 

Then $P$ and $P'$ induce the same observable distribution of $(X,T,Y)$,
yet their expected counterfactual outcomes can differ:
\[
\mathbb{E}_{P}\!\left[Y(1)\mid X=x,Y(0)=y\right]
\neq
\mathbb{E}_{P'}\!\left[Y(1)\mid X=x,Y(0)=y\right]
\]
for some $(x,y)$. Therefore $\mu(x,y)$ is not identified from the observed data and its
value depends on the unobserved cross-world dependence structure.
\end{lemma}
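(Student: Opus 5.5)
The proof has two parts. First, the observable law of $(X,T,Y)$ depends only on the marginals of $Y(0)\mid X$ and $Y(1)\mid X$, so $P$ and $P'$ are observationally equivalent. Second, we exhibit an explicit pair $(P,P')$ whose conditional means $\mu(x,y)$ differ.

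For the first part, note that $P$ and $P'$ share the law of $X$ and the propensity score $\pi(x)$; this is part of the setup of the comparison and can always be arranged. For any Borel sets $A,B$ and $t\in\{0,1\}$, strong ignorability gives
\[
P(X\in A, T=t, Y\in B)=\int_A \pi(x)^t(1-\pi(x))^{1-t}\,P\bigl(Y(t)\in B\mid X=x\bigr)\,dP_X(x).
\]
This uses $\{T=t,\,Y\in B\}=\{T=t,\,Y(t)\in B\}$ together with $T\indep (Y(1),Y(0))\mid X$. The right-hand side involves only the marginal law of $Y(t)\mid X$, so it is the same under $P$ and $P'$. Hence the observable distributions coincide.

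For the second part, it suffices to give one concrete pair, since the lemma only claims that the means \emph{can} differ. Take $X$ with any law and $\pi\equiv 1/2$, so overlap holds. Given $X=x$, let $(Y(1),Y(0))$ be bivariate normal with means $\mu_1(x),\mu_0(x)$, unit variances, and correlation $\rho$ under $P$ and $\rho'\neq\rho$ under $P'$. For example, $\rho=0$ and $\rho'=1/2$ keeps both laws non-degenerate. Draw $T$ independently of $(Y(1),Y(0))$ given $X$. The two distributions then share the Gaussian marginals $N(\mu_t(x),1)$, which makes them observationally equivalent by the first part. The Gaussian conditioning formula gives
\[
\mathbb{E}_P[Y(1)\mid X=x,Y(0)=y]=\mu_1(x)+\rho\,(y-\mu_0(x)),
\]
and the same expression with $\rho'$ under $P'$. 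These differ for every $y\neq\mu_0(x)$, so $\mu(x,y)$ is not a functional of the observable law.

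The only delicate point is the reading of the statement. Taken literally for \emph{arbitrary} $P,P'$ with $\rho\ne\rho'$, the conclusion could fail, because two couplings with different correlations might still share the regression function $\mathbb{E}[Y(1)\mid Y(0)]$. That is why I interpret the lemma existentially ("can differ") and prove it by the explicit Gaussian construction above. If a stronger version were wanted, one could add a general remark: whenever $\mathbb{E}[Y(1)\mid X,Y(0)]$ is affine in $Y(0)$, its slope equals $\rho(x)\,\sigma_1(x)/\sigma_0(x)$, so $\rho\ne\rho'$ forces the two means to differ somewhere. Beyond this interpretive point, every step is routine.
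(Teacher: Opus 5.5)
Your main argument is correct and is essentially the paper's proof. Both construct bivariate Gaussian potential outcomes with common marginals but different correlations, and then compare the conditional means $\mu_1(x)+\rho\,\tfrac{\sigma_1(x)}{\sigma_0(x)}\bigl(y-\mu_0(x)\bigr)$. Your derivation of observational equivalence from strong ignorability, including the point that $P_X$ and $\pi$ must also be shared, makes precise a step the paper only asserts.

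Your closing remark, however, is wrong: the literal, universal reading of the lemma cannot fail. Let $m(y)=\mathbb{E}[Y(1)\mid X=x,Y(0)=y]$. By the tower property,
\[
\operatorname{Cov}\bigl(Y(1),Y(0)\mid X=x\bigr)=\operatorname{Cov}\bigl(m(Y(0)),Y(0)\mid X=x\bigr),
\]
and the right-hand side depends only on $m$ and on the law of $Y(0)\mid X=x$.

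$P$ and $P'$ share that law and also the standard deviations $\sigma_0(x),\sigma_1(x)$. So two couplings of the same marginals with the same regression function (a.e.) must have the same covariance, and hence the same correlation. Contrapositively, $\rho(x)\neq\rho'(x)$ forces $\mathbb{E}_P[Y(1)\mid X=x,Y(0)=y]\neq\mathbb{E}_{P'}[Y(1)\mid X=x,Y(0)=y]$ on a set of $y$ of positive probability. This needs no Gaussian or affine assumption; the finite second moments it uses are already implicit in $\rho$ being defined.

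So the lemma holds for every admissible pair $P,P'$, which is stronger than what your example or the paper's establishes. Your affine-slope remark is just a special case of this identity.
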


Lemma \ref{lemma_nonidentifiability} implies that counterfactual prediction cannot be based solely on the observed laws of \(Y(0)\mid X\) and \(Y(1)\mid X\). Different cross-world dependencies can produce identical observed data but yield different counterfactual predictions. Thus, any meaningful inference requires explicitly specifying a full cross-world dependence structure (unless one is willing to settle for partially identified bounds \citep{zhang2025individualtreatmenteffectprediction}).

\subsection{Construction under known cross-world dependence }
In order to construct an estimate for the counterfactual outcome, a natural starting point is to \textit{separately} construct point estimates and prediction intervals for the treated group and the control group. For point prediction, any machine learning method, such as random forests or neural networks, can be used. For interval estimation, one may use $\frac{\alpha}{2}$- and $(1-\frac{\alpha}{2})$-quantile regression, conformal prediction, or any other  uncertainty-quantification method. We refer to Appendix~\ref{appendix_UQ} for details on classical methods and their properties. Suppose their form is as follows: 
\begin{equation}
\begin{aligned}
& \quad \quad \quad\hat\mu_0(x) \text{ and } \hat\mu_1(x) \text{ are estimates of } \mu_0(x) \text{ and } \mu_1(x), \text{ respectively, and} \\
&\tilde{C}_0(x) = [\hat{\mu}_0(x) - l_0(x),\ \hat{\mu}_0(x) + u_0(x)] , \quad \quad \tilde{C}_1(x) = [\hat{\mu}_1(x) - l_1(x),\ \hat{\mu}_1(x) + u_1(x)] ,
\end{aligned}
\label{C_1_and_C_0_definitions}
\end{equation}
where $l_t, u_t\geq 0$ are the (lower and upper) widths of prediction intervals for $Y(t), t=0,1$.  This is visualized in Figure~\ref{figure_Crho_explaining}. Ideally, $\tilde{C}_t$ satisfy either marginal or conditional coverage:
\begin{equation*}
P\big(Y(t)\in \tilde{C}_t(X)\big)\geq 0.9, \quad \text{or}\quad P\big(Y(t)\in \tilde{C}_t(x) \mid X=x\big)\geq 0.9,
\end{equation*}
where marginal coverage is automatically satisfied for conformal methods, while conditional coverage typically requires large sample sizes or strong assumptions in case of high-dimensional $X$. We combine these quantities to construct a point estimate $\hat{\mu}_\rho$ and a prediction interval $C_\rho$ as follows:
\begin{definition}\label{def_C_rho}
Let \( \rho \in [-1, 1] \). Consider baseline estimates in the form (\ref{C_1_and_C_0_definitions}).  
We define the point predictors:
\begin{equation*}
\hat{\mu}^t_\rho(x, y) = 
\begin{cases}
\hat{\mu}_1(x) + \rho \cdot \lambda(x) \cdot \left( y - \hat{\mu}_0(x) \right), & \text{if } t = 0, \\[10pt]
\hat{\mu}_0(x) + \rho \cdot \dfrac{1}{\lambda(x)} \cdot \left( y - \hat{\mu}_1(x) \right), & \text{if } t = 1,
\end{cases}
\end{equation*}
where
\(
\lambda(x) = \frac{l_1(x)+u_1(x)}{l_0(x)+u_0(x)}
\)
is the relative width of the baseline prediction intervals. Given these point predictors, we define the \( C_\rho \) intervals by
\begin{equation*}
C^t_\rho(x, y) = 
\begin{cases}
\left[\, \hat{\mu}^t_\rho(x, y) - \sqrt{1 - \rho^2} \cdot l_1(x),\;\; \hat{\mu}^t_\rho(x, y) + \sqrt{1 - \rho^2} \cdot u_1(x) \,\right], & \text{if } t = 0, \\[10pt]
\left[\, \hat{\mu}^t_\rho(x, y) - \sqrt{1 - \rho^2} \cdot l_0(x),\;\; \hat{\mu}^t_\rho(x, y) + \sqrt{1 - \rho^2} \cdot u_0(x) \,\right], & \text{if } t = 1.
\end{cases}
\end{equation*}
For notational simplicity, we omit the superscript and write 
\( C_\rho(x, y) = C^t_\rho(x, y) \) and 
\( \hat{\mu}_\rho(x, y) = \hat{\mu}^t_\rho(x, y) \) when evident from context 
(typically when \( t = 0 \) and the counterfactual \( Y(1) \) is of interest). We refer to the resulting procedure as \emph{Retrospective Counterfactual Prediction}, denoted $\mathrm{RCP}(\rho)$.
\end{definition}
The intuition behind the choices for $\hat{\mu}_\rho$ and $C_\rho$  is simple: the larger $\rho$, the more weight is put on the (centered) factual outcome. The role of $\lambda(x)$ is to adjust for potential differences in variance between treated and untreated groups; in settings where equal variances across groups can be reasonably assumed, one may simply set $\lambda(x)=1$. 

\begin{figure}
    \centering
        \vspace{-1cm}
    \includegraphics[width=1\linewidth]{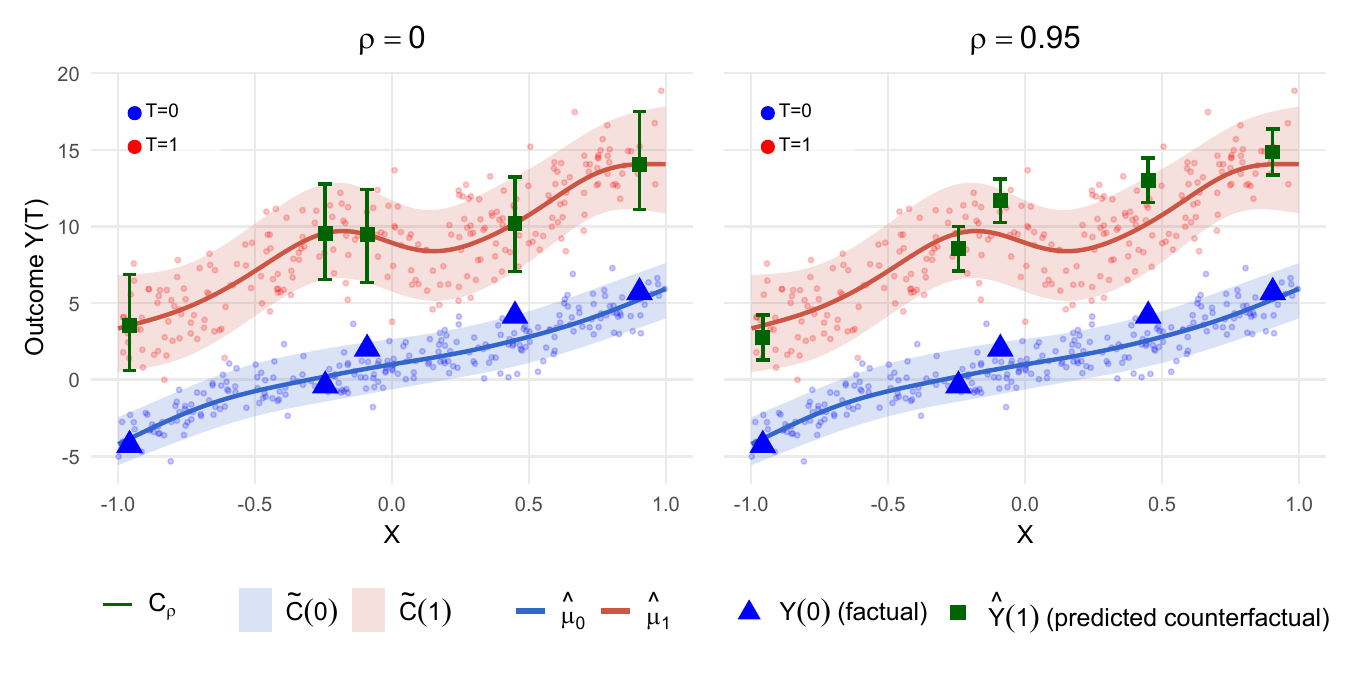}
    \caption{Proposed $\mathrm{RCP}(\rho)$ estimator $\hat Y(1):=\hat{\mu}_\rho(x,y)$ and prediction interval $C_\rho(x,y)$, combining baseline predictions with cross-world dependence, shown for two choices of $\rho$. Across panels, the underlying data and baseline models are identical; increasing $\rho$ only shifts the counterfactual prediction in the direction implied by the factual residual and shrinks the interval. We show  $\rho=0$ (no outcome conditioning) and $\rho=0.95$ (strong dependence) on five highlighted units. }
    \label{figure_Crho_explaining}
\end{figure}

\subsection{Consistency and optimality}
Under correctly specified cross-world dependence between between $Y(1)$ and $Y(0)$, our estimators are consistent and even asymptotically optimal: 

\begin{theorem} \label{theorem_consistency}
Let $x\in\mathcal{X}$ and suppose $\big(Y(1),Y(0)\big)\mid X=x $ is Gaussian with $\rho(x) = \mathrm{cor}\big(Y(1),Y(0)\mid X=x\big)\in[-1,1]$.  Let $\hat{\mu}_t(x)$ be consistent estimators of $\mu_t(x)$, and assume the prediction interval widths $l_t(x),u_t(x)$ are asymptotically conditionally valid\footnote{This holds for many nonparametric estimators under ignorability, overlap and mild smoothness assumptions, including random forests for estimating $\hat{\mu}_t(x)$ \citep{Wager_2024_Book_Causal,  consistency_of_random_forests} and (conformalized) quantile random forests for prediction intervals \citep{Meinshausen_2006_Quantile_Regression}. }. 

Then, for any fixed $y\in\mathbb{R}$, the $\hat{\mu}_\rho(x,y)$ is a consistent estimator of the conditional mean,
\[
\hat{\mu}_\rho(x,y)\xrightarrow{p} \mathbb{E}\big[Y(1)\mid X=x,Y(0)=y\big], \quad \text{as} \,\,n\to\infty.
\]
 The $C_\rho$ prediction intervals achieve asymptotically perfect conditional coverage,
\begin{equation}
    \label{eq4256}
    \lim_{n\to\infty}\mathbb{P}\big(Y(1)\in C_\rho(X,Y(0))\mid X=x,Y(0)=y\big)=0.9.
\end{equation}
Moreover,  $C_\rho$ prediction intervals are \textbf{asymptotically} \textbf{optimal} in a sense that it is the smallest set satisfying (\ref{eq4256}) as $n\to\infty$.  
\end{theorem}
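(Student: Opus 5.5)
The plan is to reduce everything to the explicit conditional law of a bivariate Gaussian and then use consistency of the plug-in quantities. Write $\sigma_t=\sigma_t(x)$ for the conditional standard deviation of $Y(t)$ given $X=x$, and $\rho=\rho(x)$. Under Gaussianity,
\[
Y(1)\mid X=x,Y(0)=y\;\sim\;\mathcal N\!\Big(\mu_1(x)+\rho\tfrac{\sigma_1}{\sigma_0}\big(y-\mu_0(x)\big),\;(1-\rho^2)\sigma_1^2\Big).
\]
The first step is to identify the limits of the baseline widths. Asymptotic conditional validity of $\tilde C_t(x)$ means that $\hat\mu_t(x)-l_t(x)$ and $\hat\mu_t(x)+u_t(x)$ converge in probability to the $\alpha/2$ and $1-\alpha/2$ conditional quantiles of $Y(t)\mid X=x$. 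Since $\hat\mu_t(x)\to\mu_t(x)$, this gives $l_t(x),u_t(x)\xrightarrow{p} z_{1-\alpha/2}\sigma_t$, with $z_{1-\alpha/2}$ the standard normal quantile. Hence $\lambda(x)\xrightarrow{p}\sigma_1/\sigma_0$ by the continuous mapping theorem, which requires $\sigma_0>0$ (non-degeneracy).

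Consistency of the point estimate then follows by plugging in. Note that $\hat\mu_\rho(x,y)=\hat\mu_1(x)+\rho\lambda(x)(y-\hat\mu_0(x))$ is a continuous function of $(\hat\mu_0,\hat\mu_1,\lambda)$. By Slutsky it converges in probability to the conditional mean displayed above, which is exactly $\mathbb E[Y(1)\mid X=x,Y(0)=y]$.

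For coverage, the endpoints of $C_\rho(x,y)$ converge in probability to
\[
m\mp z_{1-\alpha/2}\sqrt{1-\rho^2}\,\sigma_1,\qquad m:=\mathbb E[Y(1)\mid X=x,Y(0)=y].
\]
These are the $\alpha/2$ and $1-\alpha/2$ quantiles of the conditional law $\mathcal N(m,(1-\rho^2)\sigma_1^2)$. If the interval is estimated from the training sample, which is independent of the new unit, then
\[
\mathbb P\big(Y(1)\in C_\rho\mid X=x,Y(0)=y\big)=\mathbb E\big[F(\hat b)-F(\hat a)\big],
\]
where $F$ is the continuous conditional CDF and $\hat a,\hat b$ are the estimated endpoints. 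Bounded convergence then gives the limit $1-\alpha=0.9$. The case $|\rho|=1$ is degenerate: the limiting interval collapses to the point $m$ and $Y(1)=m$ almost surely. I would handle it separately or exclude it, noting that coverage is then trivial only in the limit, which is a subtle point.

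Optimality follows from the limiting law being Gaussian, hence symmetric and unimodal. Among all sets with conditional probability $0.9$, the shortest is the highest-density region, which is exactly the central interval $m\pm z_{1-\alpha/2}\sqrt{1-\rho^2}\sigma_1$. The limit of $C_\rho$ coincides with this set, so its length converges to the minimal length. I expect the main obstacle to be making the first step rigorous. We need "asymptotically conditionally valid" to mean convergence of the endpoints to the \emph{equal-tailed} conditional quantiles, not merely coverage of at least $0.9$, since mere coverage would allow asymmetric or overly wide intervals and break both the $\lambda$ limit and optimality. I would state this interpretation explicitly as the working assumption.
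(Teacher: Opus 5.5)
Your proposal is correct and follows the paper's proof: the Gaussian conditional law, the widths converging to $z_{0.95}\sigma_t$ so that $\lambda(x)\to\sigma_1/\sigma_0$, a plug-in argument for the conditional mean, endpoints converging to the oracle interval, and optimality from the symmetric Gaussian conditional law (the paper's Lemma~\ref{motivation_lemma}). Your bounded-convergence step for coverage and your explicit reading of ``asymptotically conditionally valid'' as endpoint convergence to the equal-tailed conditional quantiles make rigorous what the paper's proof assumes implicitly. Your caveat that $|\rho|=1$ must be excluded, because the degenerate conditional law means the stated $0.9$ coverage limit does not hold in general, points to a limitation of the statement that the paper's proof also glosses over.
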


The assumption of Gaussianity and \(\rho(x)\) are both modeling assumptions about how \(Y(1)\) and \(Y(0)\) relate, and neither can be learned from data. The Gaussian copula simply translates a chosen value of \(\rho(x)\) into a fully specified cross-world distribution, and any other copula could serve the same role. This highlights the central challenge of retrospective counterfactual prediction: a full dependence structure between the two potential outcomes must be specified, not estimated. Analogous consistency and optimality results to Theorem~\ref{theorem_consistency} can be straightforwardly derived under any alternative cross-world dependence structure, under appropriate modifications.

\subsection{Special cases: \texorpdfstring{$\rho=0$}{rho0} and \texorpdfstring{$\rho=1$}{rho1} }
\label{subsection_special_cases_rho}
When $\rho = 0$, our predictions do not depend on $y$: $\mu_\rho(x,y) = \hat{\mu}_{1}(x)$ and $C_\rho(x,y) = \tilde{C}_1(x)$, as the factual outcome $Y_i(0)$ provides no information about the missing potential outcome. The problem then reduces to a standard regression setting, as discussed e.g. in \cite{Lei_2021_Conformal_Inference}. Under $Y(1)\indep Y(0)\mid X$, our $C_\rho$ intervals inherit the validity of the baseline $\tilde C_1$ interval:
\begin{align}
\label{eq33}
\mathbb{P}(Y(1) \in \tilde C_1(X) \mid X=x) \geq 0.9 
&\implies 
\mathbb{P}(Y(1) \in C_\rho(X,Y(0)) \mid X=x, Y(0)=y) \geq 0.9, \\
\label{eq33b}
\mathbb{P}(Y(1) \in \tilde C_1(X)) \geq 0.9 
&\implies 
\mathbb{P}(Y(1) \in C_\rho(X,Y(0))) \geq 0.9.
\end{align}

When $\rho = \pm 1$ and $\lambda(x)=1$, we have $\hat{\mu}_\rho(x,y)=y+\hat{\tau}(x)$ and $C_\rho(x,y)=\{\hat{\mu}_\rho(x,y)\}$, corresponding to a constant treatment effect:
\begin{equation}\label{eq34}
 \mu_\rho(x,y_0)=\hat{\mu}_\rho(x,y_0) \implies \mathbb{P}(Y(1)\in C_\rho(X,Y(0))\mid X=x,Y(0)=y)=1.    
\end{equation}

In practice, however, $\mu_\rho(x,y_0)$ is unknown and must be estimated, introducing bias and potentially non-valid prediction intervals. Section~\ref{subsection_C_with_CI} discusses how to extend $C_\rho$ intervals to account for the additional uncertainty from this estimation.

\subsection{Finite sample bias correction: introducing \texorpdfstring{$C_\rho^{+CI}$}{C} prediction intervals}
\label{subsection_C_with_CI}

We enlarge \( C_\rho \) \textit{prediction} \textit{intervals} by adding \textit{confidence} \textit{intervals} for \( \mu_\rho \), estimated for instance via bootstrapping.

\begin{definition}
Let \( \rho \in [-1, 1] \). Consider prediction intervals for \( Y(1) \) and \( Y(0) \) of the form~\eqref{C_1_and_C_0_definitions}, and suppose we have confidence intervals \( CI(x,y) = [\hat{\mu}_\rho(x, y) - r_l(x,y),\, \hat{\mu}_\rho(x, y) + r_u(x,y)] \). We define the bias-corrected \( C_\rho^{+CI} \) intervals as
\begin{equation*}
C^{+CI}_\rho(x, y) = \left[\, \hat{\mu}_\rho(x, y) - c \cdot r_l(x,y) - \sqrt{1 - \rho^2} \cdot l_{1-T_i}(x),\;\; \hat{\mu}_\rho(x, y) + c \cdot r_u(x,y) + \sqrt{1 - \rho^2} \cdot u_{1-T_i}(x) \,\right],
\end{equation*}
where \( l_{1-T_i}(x) \) and \( u_{1-T_i}(x) \) select the appropriate prediction bounds depending on treatment status \( T_i \), and \( c \in [0,1] \) is a hyperparameter. In simple terms, \( C^{+CI}_\rho \) extends \( C_\rho \) by adding a scaled confidence interval around \( \hat{\mu}_\rho(x,y) \), with scaling factor \( c \). We consider the choice  \( c = \rho^2 \) following the same argument as in \citep{bodik2025crossworld}. 
\end{definition}

Following \eqref{eq33} and \eqref{eq34}, when \( \rho = 0 \), no adjustment is needed, while for \( \rho = \pm 1 \), full confidence intervals must be incorporated to guarantee correct coverage. This motivates the choice \( c = \rho^2 \), ensuring that \( C_\rho^{+CI} \) smoothly interpolates between no correction (\( \rho = 0 \)) and full correction (\( \rho = 1 \)). For this choice, we also have the following guarantee. 
\begin{consequence}
If \( \rho = \pm 1 \) and confidence intervals satisfy \( \mathbb{P}(\mu(x,y_0) \in \hat{\mu}_\rho(x,y_0) \pm r(x,y_0)) \geq 1 - \alpha \), then
$\mathbb{P}(Y(1) \in C^{+CI}_\rho(X, Y(0)) \mid X = x, Y(0) = y) \geq 1 - \alpha.$
\end{consequence}

\section{Experiments}
\label{section_simulations}

We evaluate our method on synthetic, semi-synthetic, and real datasets using both point estimation and prediction interval metrics, comparing against four baselines under varying cross-world correlation $\rho$. A user-friendly implementation of our methods in both \texttt{R} and \texttt{Python}, along with scripts to reproduce all experiments, is available at: \url{https://github.com/jurobodik/Counterfactual_prediction.git}

\subsection{Details}

\textbf{Datasets:} We consider a variety of data-generating processes commonly used in the related literature; full details are provided in Appendix~\ref{appendix_simulations_data}. The \textbf{synthetic} datasets feature non-constant propensity scores and randomly generated CATE functions based on smooth random polynomials. These settings allow us to vary the dimensionality \(d = \dim(\mathbf{X})\) and the cross-world correlation parameter \(\rho\), thus controlling both complexity and treatment-effect heterogeneity. In addition, we include the \textbf{IHDP} dataset, which uses real covariates from a randomized trial and simulated counterfactual outcomes, providing a semi-synthetic benchmark. The \textbf{Twins} dataset contains real covariates and real paired outcomes corresponding to different treatment assignments, enabling the construction of both factual and counterfactual outcomes for each unit. In this dataset, we consider $\rho=0.5$ as a reference choice following \cite{bodik2025crossworld}. 

\textbf{Implementation details:}  
To better reflect real-world scenarios where $\rho$ is unknown, we report both i) $\rho_{used} = \rho_{true}$ and ii) $\rho_{used} = \rho_{true}+ Unif(-0.5, 0.5)$ capped at $[-1,1]$. 

In our $\mathrm{RCP}(\rho)$ framework, we estimate conditional means and quantiles: in low dimensions ($d \leq 5$) we use GAM for the mean and qGAM \citep{qGAM} for quantiles, while in higher dimensions ($d > 5$) we switch to random forests for the mean and quantile random forests \citep{Meinshausen_2006_Quantile_Regression} for quantiles, trading some low-dimensional efficiency for scalability. TabPFN \citep{hollmann2023tabpfntransformersolvessmall} is a good potential alternative.

To construct the proposed $C_{\rho}$ and $C^{+CI}_{\rho}$ intervals, we use CQR (see Appendix~\ref{appendix_UQ}) to produce the base intervals in \eqref{C_1_and_C_0_definitions}. While more advanced methods often achieve better empirical results, we adopt CQR as a simple, well-established baseline, following \cite{Lei_2021_Conformal_Inference, Alaa_2023_Conformal_Meta_learners}, and \cite{bodik2025crossworld}.

\textbf{Baseline methods:} In Appendix~\ref{appendix_methods}, we provide details of the existing methods used to estimate counterfactuals. We consider four representative approaches. First, \textbf{Direct Outcome (DO) modeling} estimates the counterfactuals by fitting the treatment-specific regression $\hat{Y}_i(1) := \mu_1(X_i)$ using Random Forests \citep{Wager_2018_Estimation_Inference} or Generalized Additive Models \citep{qGAM} (using the same choices as in $C_\rho$). 
Second, \textbf{CATE-adjusted imputation} estimates the CATE via a T-learner \citep{Kunzel_2019_Metalearners_Estimating}, DR-learner (doubly robust, \cite{JMLR:v25:22-1233}) or Generalized Random Forest \citep{Athey_2019_Generalized_Random}, and adjusts the observed outcome using $\hat{Y}_i(1) = Y_i(0) + \hat{\tau}(X_i)$. We only report the T-learner as it yielded the best results on the considered datasets. Note that while many other CATE estimators exist, the goal is to illustrate the core imputation approach, which remains fundamentally limited even with perfectly estimated CATE.
Third, \textbf{Matching-based imputation} uses nearest-neighbor matching with Mahalanobis distance to impute the missing potential outcome from similar units in the opposite treatment group. 
Fourth, \textbf{adversarial generative modeling} employs GANITE \citep{yoon2018ganite}, a two-stage generative adversarial network that imputes and refines counterfactual predictions, typically suitable only in high-dimensional, nonlinear settings.

\textbf{Setup}: We conducted experiments on datasets: synthetic ($n = 1000$), IHDP ($n = 747$), and Twins ($n = 11{,}983$). Each synthetic and IHDP experiment was repeated 50 times to reduce Monte Carlo variability, while the Twins dataset was analyzed once using the full sample. All methods used an 80/20 train–calibration split for CQR and prediction intervals at level $\alpha = 0.1$. Computing $\mu_{\rho}$ and $C_{\rho}$ is fast, as the main cost lies in fitting four quantile regressions; however, $C_{\rho}^{+CI}$ requires computing bootstrap confidence intervals (we used $100$ bootstraps), which is computationally more intensive; running all datasets and repetitions took approximately four days on an Intel Core i5-6300U (2.5 GHz, 16 GB RAM).

\textbf{Metrics:} To assess performance, we use MSE for point predictions and the Interval Score (metric that combines coverage and width) for prediction intervals:
$$
\mathrm{MSE}=\tfrac{1}{n}\sum_{i=1}^n(\hat Y_i^{cf}-Y_i^{cf})^2,\qquad 
{\mathrm{IS}}_{\alpha}=\tfrac{1}{n}\sum_{i^{}=1}^n (U_i-L_i)+\tfrac{2}{\alpha}\big[(L_i-Y_i^{cf})_++(Y_i^{cf}-U_i)_+\big],
$$
where $[L_i,U_i]$ are the estimated prediction intervals at level $1-\alpha$ and $z_+=\max(z,0)$.

\begin{figure}[t]
        \vspace{-1cm}
        \hspace{-0.7cm}
    \includegraphics[width=1.1\linewidth]{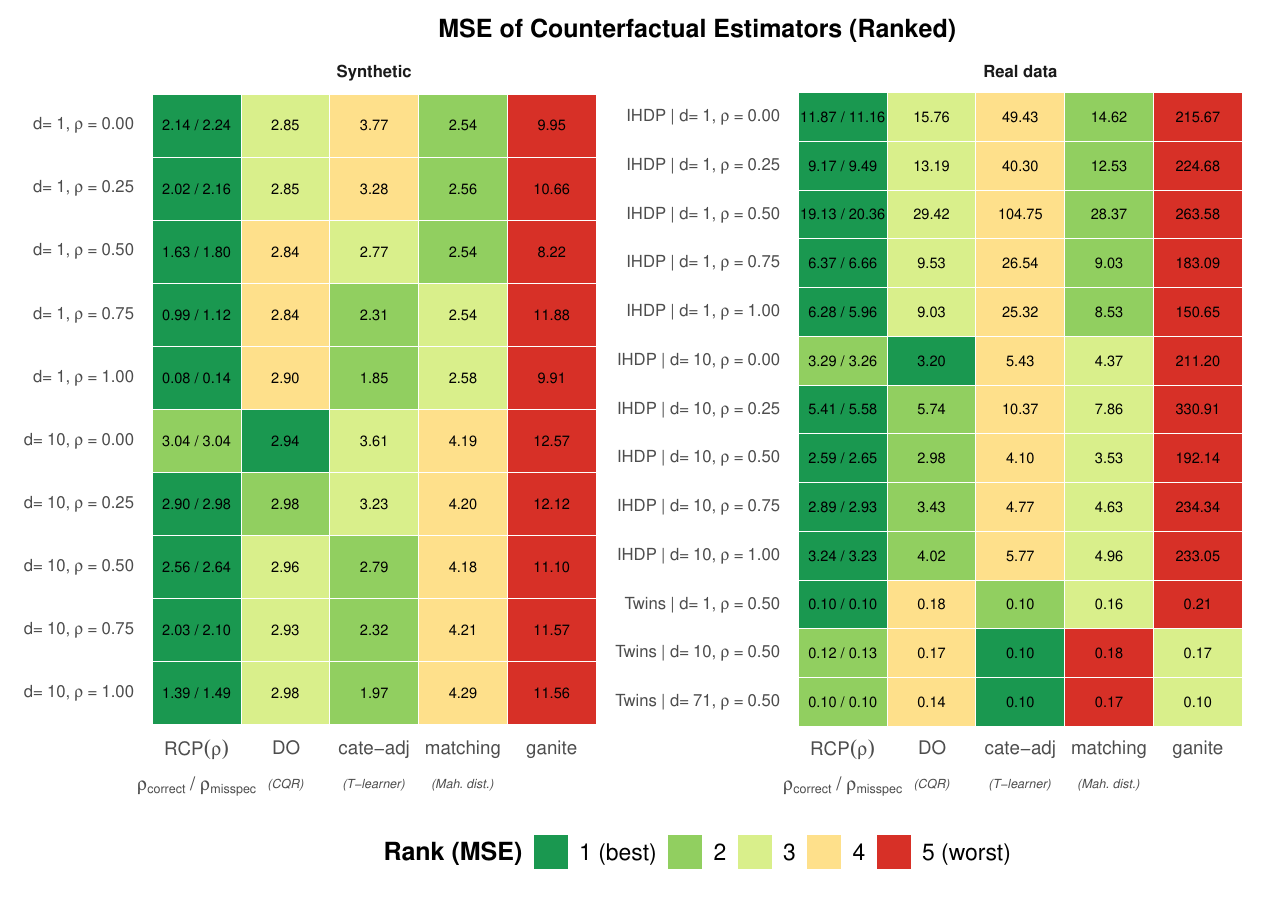}
    \caption{Mean squared error of different estimators across different datasets, averaged over 50 repetitions. In $\mu_{\rho}$, we use either $\rho = \rho_{true}$, or mimic misspecification by using $\rho = \rho_{true}+ Unif(-0.5, 0.5)$. Standard deviations for each entry can be found in Figure~\ref{fig_MSE_with_sd} located in Appendix~\ref{appendix_simulations}.}
    \label{fig_MSE_heat}
\end{figure}

\subsection{Results of the experiments}

Figure~\ref{fig_MSE_heat} reports the MSE of point predictions; Figure~\ref{fig_IS_heatwave} in Appendix~\ref{appendix_additional_results} reports the interval scores. Both $\mathrm{RCP}(\rho)$ variants (correctly specified and misspecified) outperform the baselines in the regimes where conditioning on the factual outcome is informative, i.e., when $0<\rho<1$. 

When the cross-world dependence is known, conditioning on $Y(0)$ reduces the conditional variance of the counterfactual from \(\operatorname{Var}(Y(1)\mid X=x)\) to \(\operatorname{Var}(Y(1)\mid X=x,Y(0)=y)\). In our synthetic experiments, $(Y(0),Y(1))$ are generated via a Gaussian copula with correlation $\rho$, and the resulting variance reduction has a closed-form scaling: $\frac{\operatorname{Var}(Y(1)\mid X=x,Y(0)=y)}{\operatorname{Var}(Y(1)\mid X=x)}= 1-\rho^2$. Thus, as $\rho$ increases, the factual  $Y(0)$ becomes more informative about $Y(1)$, so the best achievable MSE decreases roughly proportionally to $(1-\rho^2)$. This explains the monotone MSE decrease with $\rho$ in Figure~\ref{fig_MSE_heat}, and also why the $\rho=0$ case performs similarly as DO (the factual outcome provides no additional signal beyond $X$).

In finite samples, the observed MSE is not only driven by this dependence-induced variance reduction, but also by error from estimating nuisance components. This contribution grows with dimension and can blur the clean $(1-\rho^2)$ scaling, making improvements dramatic in low dimension but more muted in higher dimension. While GANITE performs poorly in our settings, it was designed for higher-dimensional regimes and typically benefits from larger $d$ and $n$. Our method is primarily targeted to lower dimensions where $Y(T)$ contains substantial information beyond $X$. 

In a few real-world datasets, $\rho_{\text{misspec}}$ yields slightly better performance than $\rho_{\text{correct}}$; this is a consequence of Monte Carlo variability. As shown in Figure~\ref{fig_MSE_with_sd}, the corresponding confidence intervals are very large in these cases, and resolving these differences would require much more repetitions to reduce simulation noise.

\subsection{Additional experiments: misspecified \texorpdfstring{$\rho$}{rho} and non-Gaussianity}
We conduct two additional experiments, evaluating the 
$\text{Gap} = \text{MSE}_{\text{RCP}} - \text{MSE}_{\text{oracle}}$, where the oracle estimator is equal to the true $\mathbb{E}[Y^{cf} \mid X,Y^{obs},T]$. All details can be found in Appendix~\ref{appendix_additioanl_experiments}.

\begin{itemize}
\item \textbf{(Misspecifying $\rho$).} Figure~\ref{fig_misspec_rho} reports experiments on synthetic data varying the \textit{true} correlation $\rho_{\text{true}}$ in the data-generating process (DGP) and the \textit{assumed} value $\rho_{\text{est}}$ in our estimator. Bias grows with misspecification $|\rho_{\text{est}}-\rho_{\text{true}}|$, and vanishes with larger $n$ only when $\rho_{\text{est}}$ is close to $\rho_{\text{true}}$; otherwise, it persists even asymptotically. This shows that even rough knowledge of $\rho$ yields large gains over ignoring the factual outcome.
\item \textbf{(Robustness to non-Gaussianity).} Appendix~\ref{app:gaussianity} (Figure~\ref{fig_nonGauss}) contains experiments with non-Gaussian outcome distributions $(Y(0), Y(1))$. In all cases the gap vanishes with $n$, though convergence is slower under non-Gaussian noise. Discrepancies are most visible at $\rho=1$.
\end{itemize}

\begin{figure}
    \centering
    \vspace{-1cm}
    \includegraphics[width=0.85\linewidth]{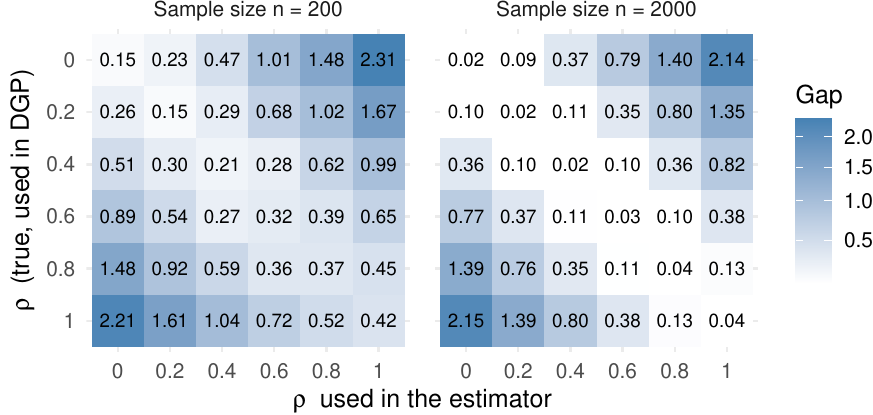}
    \caption{$\text{Gap} = \text{MSE}_{\text{RCP}} - \text{MSE}_{\text{oracle}}$ calculated across different misspecifications of $\rho$. Bias persists when $\rho$ is far from the truth, but vanishes asymptotically if $\rho$ is specified correctly. This demonstrates that incorporating even approximate knowledge of cross-world dependence improves counterfactual predictions.   }
    \label{fig_misspec_rho}
\end{figure}

\section{Conclusion, Limitations and Future Research}
\label{conclusion}
The factual outcome carries valuable individual-level information that should not be ignored in retrospective counterfactual prediction. We formalize this importance through the cross-world correlation parameter $\rho$, which determines how strongly observed and unobserved outcomes are linked.  By treating $\rho$ as an explicit modeling choice, our approach interpolates between classical extremes, with $\rho = 0$ discarding the factual outcome and $\rho = 1$ assuming constant effects, and delivers predictions that are theoretically well motivated and empirically effective whenever even approximate knowledge of $\rho$ is available.

The biggest practical limitation of our approach is the necessity of specifying the cross-world dependence. This requirement reflects a fundamental barrier rather than a methodological issue: the relationship between $Y(1)$ and $Y(0)$ is inherently unidentifiable.  However, \textit{every existing method already makes a fixed, implicit assumption about $\rho$}. Our contribution is to make this dependence explicit, enabling practitioners to incorporate  domain knowledge, auxiliary data or sensitivity analysis into retrospective counterfactual inference. This transparency clarifies the assumptions needed for inference and opens new possibilities for research.

Future work should investigate how the finite-sample behavior depends on the strength and smoothness of dependence, aiming to better understand rates of convergence when $\rho$ is known or restricted. A second direction is to explore richer, non-Gaussian dependence structures that may be more appropriate in some domain applications. Another promising direction is to extend the methodology to continuous treatments \citep{bodik_extrapolating_extreme} or dynamic settings such as time series \citep{li2026controllable, bodik_2024_Causality_Exteremes}, where cross-world assumptions could provide structure for dose–response counterfactual curves. Finally, selecting such structures in practice remains an open challenge; developing tools or diagnostics to guide these choices would broaden the applicability of cross-world inference.


\section*{Reproducibility Statement and Usage of Large Language Models}
All code and datasets used in this work are provided in the supplementary material to ensure full reproducibility of our results.
We declare that we used a large language model for grammar and language polishing, as well as for limited coding assistance (e.g., boilerplate code and debugging). All conceptual and theoretical contributions, experimental designs, and conclusions are our own.

\bibliography{bibliography}

\newpage
\appendix
\begin{center}
\Large \textbf{Appendix}
\end{center}

\section{Literature review: details}

\subsection{Counterfactual estimation methods: other approaches}
\label{appendix_methods}
We consider four classes of approaches for estimating the unobserved potential outcome $Y_i(1)$ for units with $T_i=0$ (and analogously $Y_i(0)$ for $T_i=1$). 

\paragraph{Direct outcome modeling (\emph{DO}).} 
This approach estimates the treatment-specific regression function $\mu_1(x) = \mathbb{E}[Y \mid X = x, T=1]$
directly from the treated sample and use $\hat{Y}_i(1) = \hat{\mu}_1(X_i)$ for counterfactual prediction. We consider two implementations: Random Forests (RF) \citep{Wager_2018_Estimation_Inference} and Generalized Additive Models (GAM) \citep{qGAM}. Unlike the CATE-adjusted approach, these methods do not require access to the observed control outcome $Y_i(0)$ for the unit, relying entirely on model-based extrapolation from treated units. To quantify uncertainty, we use the same prediction intervals as in \eqref{C_1_and_C_0_definitions}. 

There is also a large number of similar approaches besides RF and GAM, also adjusting for the distribution shift between the treated/untreated groups. \cite{NEURIPS2018_a50abba8} employ deep representation learning to estimate $\hat{Y}_i(1-T) = g(f(X_i), T_i)$ where $f,g$ are neural networks based preserving local similarity between the treated groups.

\paragraph{CATE-adjusted imputation (\emph{CATE-adj}).} 
This approach first estimates CATE $\tau(X_i)$ and then shifts the observed control outcome by this estimated effect:
\[
\hat{Y}_i(1) = Y_i(0) + \hat{\tau}(X_i).
\]
We use three alternative CATE estimators: the T-learner \citep{Kunzel_2019_Metalearners_Estimating}, the Generalized Random Forest (GRF) \citep{Athey_2019_Generalized_Random}, and a doubly robust (DR) estimator \citep{JMLR:v25:22-1233}. Closely related meta-learners include the S-learner, which fits a single model with treatment as an input feature, and the X-learner, which augments the T-learner with imputed treatment effects for the opposite treatment group and often performs well under treatment imbalance \citep{Kunzel_2019_Metalearners_Estimating}. These alternative meta-learners share the same conceptual foundation. \cite{pmlr-v48-johansson16, lacombe2025asymmetricallatentrepresentationindividual} use deep learning alternatives; balancing counterfactual regression or adding assymetrical latend represnetation.

To quantify uncertainty, confidence intervals are computed using standard procedures, obtaining prediction intervals in a form $\hat{Y}_i(1) = Y_i(0) + \hat{\tau}(X_i) \pm conf.int(\hat{\tau}(X_i)).$
In our experiments, we only considered T-learner, GRF and DR estimators for CATE-adjusted imputation, as other approaches are typically significantly more performative only in high-dimensional datasets or when treated and untreated units differ substantially, which is not the case in our datasets.

\paragraph{Matching-based imputation (\emph{Matching}).}
This approach imputes missing potential outcomes using outcomes from similar units in the opposite treatment group, selected via a distance metric in covariate space \citep{stuart2010matching_review, abadie2006matching_imputation}. Beyond nearest-neighbor and optimal matching, advances include kernel-based matching to minimize estimation error \citep{kallus2016optimal_matching} and full or genetic matching combined with double-robust analysis for improved bias and efficiency \citep{colson2016matching_dr}. For high-dimensional or categorical data, algorithms like DAME prioritize relevant covariates \citep{dieng2019dame}. Similar ideology was also used in ALRITE \citep{lacombe2025asymmetricallatentrepresentationindividual}, where the authors imputed counterfactuals based on the closest distance in a latent space, in order to improve CATE estimation.  

We implemented nearest-neighbor matching with a uniform kernel and optional replacement, using either the Mahalanobis distance between standardized covariates or the absolute difference in logit propensity scores (the former led to better results so we only report that). The propensity scores is estimated by standard classification forest. For a treated unit, the counterfactual $\hat{Y}_i(0)$ is the average outcome among its matched controls, and vice versa for control units. This nonparametric approach relies on local overlap in covariates and assumes conditional independence of potential outcomes and treatment given covariates. To quantify uncertainty, we construct unit-level prediction intervals for the counterfactuals using the empirical variance of the donor outcomes: for a unit with $K \geq 2$ matches, the half-width is given by $t{1-\alpha/2,K-1} \cdot s / \sqrt{K}$, where $s$ is the sample standard deviation of the matched donor outcomes, yielding $(\hat{Y}_i^{\mathrm{cf}} \pm \text{half-width})$; if $K = 1$, the half-width is zero. This approach implicitly assumes conditional independence of potential outcomes ($\rho=0$, similarly to DO) and independent treatment given covariates.

\paragraph{Adversarial generative modeling (\emph{GANITE}).} 
 \citep{yoon2018ganite} employs a two-stage generative adversarial network (GAN) framework tailored to causal inference. In the first stage, a generator–discriminator pair is trained to impute the missing counterfactual outcomes by making the generated outcomes indistinguishable from observed ones given covariates and treatment assignment. In the second stage, a separate adversarial network refines these predictions to improve estimation of individualized treatment effects, encouraging accurate recovery of both potential outcomes simultaneously. This approach is particularly suited to high-dimensional, nonlinear settings. Some extentions were also proposed that work better under some alternative scenarios (e.g. SCIGAN-ITE by \cite{NEURIPS2020_bea5955b}).

\textbf{Other approaches. } Some other approaches exist, such as \textbf{Bayesian causal inference}, where the missing counterfactuals are treated as latent variables, and uncertainty is integrated through the posterior distribution. For example, \citet{Alaa_gaussian_bayes} propose a Bayesian multitask Gaussian process to jointly model \(\big(Y(1), Y(0)\big) \mid X\), producing posterior distributions over the potential outcomes. While Bayesian methods offer coherent uncertainty quantification, they often rely on strong modeling assumptions and can be sensitive to prior specifications \citep{li2022bayesiancausalinferencecritical}. Moreover, they can be restrictive when aiming to leverage flexible modern machine learning techniques.

\subsection{Uncertainty quantification and prediction intervals in classical regression}
\label{appendix_UQ}
In a standard regression framework, we observe data \((X_i, Y_i) \sim P_X \times P_{Y|X}\) for \(i = 1, \dots, n\), and seek a prediction set \(C(X)\) for future responses that satisfies a coverage property. Two common notions of coverage are:
\begin{align}
\mathbb{P}\big(Y_{n+1} \in C(X_{n+1})\big) &\ge 1-\alpha 
\qquad \text{(marginal coverage)}, \nonumber \\
\mathbb{P}\big(Y_{n+1} \in C(X_{n+1}) \mid X_{n+1}=x\big) &\ge 1-\alpha
\qquad \text{(conditional coverage)}. \nonumber
\end{align}

Conditional coverage is a stronger requirement but is generally unattainable in a distribution-free, finite-sample setting without strong assumptions or asymptotics \citep{barber2020limitsdistributionfreeconditionalpredictive}. By contrast, marginal coverage can be attained without modeling assumptions via conformal prediction \citep{angelopoulos2024theoreticalfoundationsconformalprediction}. Recent work has also explored data-driven techniques to improve conditional coverage, such as combining epistemic+aleatoric sources of uncertainty \citep{azizi2025clearcalibratedlearningepistemic}, rectifying conformity scores \citep{Plassier2025ConditionalCoverage}, or optimizing subgroup-conditional guarantees through flexible frameworks like Kandinsky conformal prediction \citep{Bairaktari2025Kandinsky}. These developments are consistent with the broader principles of Predictability, Computability, and Stability (PCS) advocated for trustworthy data science \citep{agarwal2025pcs, Yu_2024_Vertidical_Data_Science_Book}. 

Conformal methods produce prediction intervals with exact finite-sample marginal coverage under exchangeability of the observed and future data points \citep{vovk_2005_algorithmic_learning, angelopoulos2024theoreticalfoundationsconformalprediction}. These methods typically split the data into training and calibration subsets, construct a preliminary predictor on the training set, and adjust it on the calibration set to guarantee coverage. A prominent example is Conformalized Quantile Regression (CQR), which uses estimated conditional quantiles to build tighter prediction intervals \citep{Romano_2019_Conformalized_Quantile}.  

\paragraph{Estimation procedure for CQR.}  
The key idea of CQR is to combine quantile regression with conformal calibration:  
\begin{enumerate}
    \item \textbf{Split the data.} Randomly divide the dataset into a training set \(\mathcal{D}_{\text{train}}\) and a calibration set \(\mathcal{D}_{\text{calib}}\). The split fraction is typically 80/20.  
    \item \textbf{Fit quantile regression models.} On \(\mathcal{D}_{\text{train}}\), estimate the conditional lower and upper quantile functions \( \hat{q}_{\alpha/2}(x)\) and \( \hat{q}_{1-\alpha/2}(x)\), often quantile random forest \citep{Meinshausen_2006_Quantile_Regression}, qGAM \citep{qGAM} or neural networks to approximate conditional quantiles for levels \(\alpha/2\) and \(1-\alpha/2\).  
    \item \textbf{Compute conformity scores.} For each \((X_i, Y_i) \in \mathcal{D}_{\text{calib}}\), compute the nonconformity score:
\[
    s_i = \max\{\hat{q}_{\alpha/2}(X_i) - Y_i,\; Y_i - \hat{q}_{1-\alpha/2}(X_i),\; 0\}.
    \]
    This measures how far \(Y_i\) lies outside the estimated conditional quantile interval.  
    \item \textbf{Calibrate using empirical quantiles.} Let \(Q_{1-\alpha}(s_1,\dots,s_m)\) be the \((1-\alpha)\)-empirical quantile of the scores from the calibration set (\(m = |\mathcal{D}_{\text{calib}}|\)).  
    \item \textbf{Construct prediction intervals.} For a new point \(x\), the CQR prediction set is:
    \[
   \tilde C(x) = \big[\hat{q}_{\alpha/2}(x) - Q_{1-\alpha},\; \hat{q}_{1-\alpha/2}(x) + Q_{1-\alpha}\big].
    \]
\end{enumerate}
This adjustment ensures that the final interval achieves marginal coverage at level \(1-\alpha\) in finite samples under exchangeability, while leveraging conditional quantile estimates for tighter intervals.

However, exchangeability (slightly weaker assumption than i.i.d.) can fail in the presence of covariate shift, e.g., in observational studies comparing treated and untreated units. In such settings, even defining marginal coverage requires specifying the \emph{target covariate distribution}: should coverage be with respect to \(P_{X\mid T=1}\) (treated), \(P_{X\mid T=0}\) (untreated), or a mixture \(P_X\)? This point is emphasized in \citet{Lei_2021_Conformal_Inference}. If one could attain conditional coverage, covariate shift would not pose a problem (recall that conditional coverage implies marginal coverage under any \(P_X\)) but such guarantees remain scarce \citep{gibbs_conditional_conformal}.

To address distributional shift, weighted conformal prediction adjusts calibration via importance weights derived from the likelihood ratio between covariate distributions; when this ratio is known, one can guarantee exact marginal coverage for the chosen target population \citep{Tibshirani_2019b_Conformal_Prediction}. When the ratio (or propensity score \(\pi(x)\)) is estimated, asymptotically valid marginal coverage is still achievable, with strong empirical performance \citep{Lei_2021_Conformal_Inference}. Recent approaches refine this idea by incorporating likelihood-ratio regularization for high-dimensional covariates \citep{Joshi2025LRQR} or leveraging unlabeled test data to adapt coverage under label scarcity \citep{Kasa2025ECP}. For settings with both covariate shift and posterior drift, weighted conformal classifiers have been proposed \citep{Wang2025PosteriorDrift}.

\newpage

\section{Additional Experiments: misspecified \texorpdfstring{$\rho$}{rho} and non-Gaussianity}
\label{appendix_additioanl_experiments}
\subsection{How vital is the assumption of Gaussianity?}
\label{app:gaussianity}
We evaluate the sensitivity of our counterfactual estimation method to violations of the Gaussianity assumption in the joint distribution of potential outcomes. Specifically, we use the Synthetic dataset described in Appendix~\ref{appendix_additional_results}, but replace the Gaussian error terms with non-Gaussian marginals coupled through different copulas. Formally, for each unit $i$, we generate
\[
(\varepsilon_i^0, \varepsilon_i^1) \overset{i.i.d}{\sim} \mathrm{Copula}_\rho\!\left(F_0, F_1\right),
\]
where $F_t$ denotes the marginal distribution of $\varepsilon_i^t$ (e.g., $t=0,1$ could follow Student-$t$, Laplace, or Chi-square distributions), and $\mathrm{Copula}_\rho$ is a copula with correlation $\rho$. By Sklar's theorem, this ensures that the joint distribution of $(\varepsilon_i^0, \varepsilon_i^1)$ has the specified marginals while preserving the desired correlation structure through $\mathrm{Copula}_\rho$. We experiment with Gaussian and Gumbel copulas to capture symmetric as well as asymmetric dependence patterns.

We vary the following factors:
\begin{itemize}
    \item Marginal distributions: Gaussian, Student-$t$ ($df = 3$), Laplace, and Chi-square ($df = 3$),
    \item Copula families: Gaussian and Gumbel,
    \item Cross-world correlation: $\rho \in \{0, 0.5, 1\}$,
    \item Sample size: $n \in \{100, 300, 500, 2000\}$ with covariate dimension fixed at $d=1$. 
\end{itemize}
For each configuration, we generate $50$ replications and compare our estimate $\hat\mu_\rho$ against the oracle estimator 
\[
\hat Y^{cf}_{\text{oracle}} := \mathbb{E}[Y^{\text{cf}} \mid X, Y^{\text{obs}}, T],
\]
which leverages the true joint distribution. We report the performance gap
\[
\text{Gap} = \text{MSE}_{\text{RCP}} - \text{MSE}_{\text{oracle}}, \qquad 
\text{MSE}_{\text{RCP}} = \tfrac{1}{n}\sum_{i=1}^n(\hat Y_i^{cf}-Y_i^{cf})^2, \quad \quad \hat Y_i^{cf} = \hat\mu_\rho.
\]
Figure~\ref{fig_nonGauss} summarizes the results. In all cases, the gap decreases with $n$, demonstrating that our $RCP(\rho)$ estimator converges to the oracle regardless of the marginal distribution or copula. The effect of non-Gaussianity is therefore limited to finite samples: convergence is noticeably slower under heavy-tailed or skewed marginals, particularly when $\rho=1$, but the asymptotic behavior remains unchanged. By contrast, under independence ($\rho=0$), our estimator is nearly indistinguishable from the oracle even in small samples. 

\textbf{In conclusion, violations of Gaussianity do not seem to threaten the validity of our method, but they can slow finite-sample convergence; especially under large cross-world dependence.}

\begin{figure}
    \centering
    \vspace{-1cm}
    \includegraphics[width=1\linewidth]{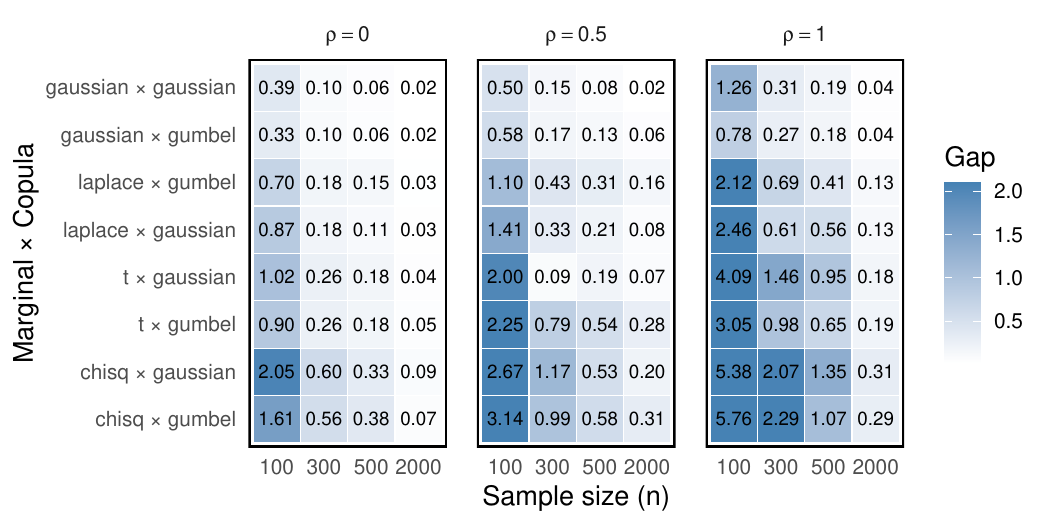}
    \caption{$\text{Gap} = \text{MSE}_{\text{our}} - \text{MSE}_{\text{oracle}}$ calculated across different marginal–copula distributions of potential outcomes $\big(Y(0), Y(1)\big)$. Here, we only considered correctly specified $\rho$ in the estimation. }
    \label{fig_nonGauss}
\end{figure}

\subsection{Details about Figure~\ref{fig_misspec_rho} and misspecified \texorpdfstring{$\rho$}{rho}}
\label{apendix_misspec_rho}

To study the effect of misspecifying the cross-world correlation $\rho$, we carried out a grid experiment on synthetic data. For each design point, we distinguish between the \textbf{true} value $\rho_{\text{true}}$ used in the data-generating process (DGP), and the \textbf{assumed} value $\rho_{\text{est}}$ used in our estimator $\hat\mu_\rho$.

We consider the \textit{synthetic} dataset (see Section~\ref{appendix_simulations_data}), a univariate covariate setting ($d=1$), two sample sizes ($n=200$ and $n=2000$), and repeated each experiment 50 times to reduce Monte Carlo variability. The true correlation $\rho_{\text{dgp}}$ was varied over a grid $\{0, 0.1, \dots, 1\}$, and for each value we estimated counterfactuals under a grid of assumed correlations $\rho_{\text{est}} \in \{0,0.1,\dots,1\}$.

For each pair $(\rho_{\text{true}}, \rho_{\text{est}})$, we generated synthetic data, computed counterfactual estimates with our method using $\rho_{\text{est}}$, and compared performance against the oracle estimator $\mathbb{E}[Y^{cf}\mid X,Y^{obs},T]$. We measured performance using the mean squared error (MSE) of counterfactual predictions, and summarized results via the $\text{Gap} = \text{MSE}_{\text{our}} - \text{MSE}_{\text{oracle}}.$ 
Results (Figure~\ref{fig_misspec_rho}) show that the gap increases systematically with the degree of misspecification $|\rho_{\text{est}}-\rho_{\text{true}}|$. When the assumed correlation is close to the truth, the gap shrinks as $n$ grows, and bias vanishes asymptotically. In contrast, for larger misspecifications, the bias persists even at large $n$, indicating that asymptotic consistency requires $\rho_{\text{est}} \approx \rho_{\text{true}}$. These results show the importance of approximate domain knowledge of $\rho$: even approximate information about its value can yield large gains over methods that implicitly assume $\rho=0$ or $\rho = 1$.

\newpage
\section{Appendix: Numerical Experiments}
\label{appendix_simulations}

We provide full details about our experiments below.

\subsection{Datasets}
\label{appendix_simulations_data}
We investigate three types of data-generating mechanisms:  
\begin{itemize}
    \item \textbf{Synthetic} (taken from \citep{bodik2025crossworld}):  For the univariate case ($d=1$), we draw $X \sim \mathrm{Unif}(-1,1)$. When $d > 1$, we follow the setup in \cite{Wager_2018_Estimation_Inference, Alaa_2023_Conformal_Meta_learners, Lei_2021_Conformal_Inference, jonkers2024conformalconvolutionmontecarlo} and generate  covariates 
$    \mathbf{X} = (X_1, \ldots, X_d),$ where each $X_j = \Phi(\tilde{X}_j)$ and $\Phi$ is the standard normal CDF. The latent vector $(\tilde{X}_1, \ldots, \tilde{X}_d)$ is sampled from a multivariate Gaussian distribution with zero mean and constant pairwise correlation $\operatorname{Cov}(\tilde{X}_j, \tilde{X}_{j'}) = 0.25$ for $j \neq j'$.  Treatment assignments are drawn from a propensity score function      \[
    \pi(\mathbf{X}) = \frac{1 + |X_1|}{4} \in [0.25, 0.5],
    \]  ensuring adequate overlap. The potential outcomes are defined as
    \begin{align*}
    Y_i(0) &= f_0(\mathbf{X}_i) + \varepsilon_i^0, \\
    Y_i(1) &= f_0(\mathbf{X}_i) + \tau(\mathbf{X}_i) + \varepsilon_i^1,
    \end{align*}
    with noise terms jointly distributed as  
    \[
    (\varepsilon_i^0, \varepsilon_i^1) \sim \mathcal{N} \!\left(
    \begin{bmatrix} 0 \\ 0 \end{bmatrix},
    \begin{bmatrix} 1 & 2\rho \\ 2\rho & 4 \end{bmatrix}
    \right).
    \]
    The treatment effect function $\tau(\mathbf{x}) = \tau(x_1, x_2)$ is a smooth random polynomial depending on the first two covariates (or only on $x_1$ when $d=1$), generated using a Perlin noise generator \citep{PerlinNoise} following \cite{bodik2024structuralrestrictionslocalcausal}. The baseline function is $f_0(x) = \beta^\top x$ with $\beta$ drawn from a standard normal distribution.

    \item \textbf{IHDP (semi-synthetic):}  
    Originally introduced in \cite{Bayes_Hill}, this dataset contains 25 pre-treatment covariates (e.g., birth weight, maternal age, education level) denoted by $\mathbf{X}$. The binary treatment $T$ indicates whether the infant participated in the intervention program. Potential outcomes represent cognitive test scores, were simulated in \cite{Bayes_Hill} as
    \begin{align}\label{eqIHDP}
    Y_i(0) &= f_0(X_i) + \varepsilon_i^0, \\
    Y_i(1) &= f_1(X_i) + \varepsilon_i^1,
    \end{align}
    where $\varepsilon_i^0, \varepsilon_i^1 \overset{\text{i.i.d.}}{\sim} N(0,1)$. The functions $f_0$ and $f_1$ are either random linear (case “A”) or nonlinear (case “B”).  We only consider case “B”. 

    While the original setup fixes $\rho = 0$, we also consider a correlated noise version:  
    \[
    \begin{pmatrix} \varepsilon_i^0 \\ \varepsilon_i^1 \end{pmatrix}
   \overset{\text{i.i.d.}}{\sim} \mathcal{N} \!\left(
    \begin{pmatrix} 0 \\ 0 \end{pmatrix},
    \begin{pmatrix} 1 & \rho \\ \rho & 1 \end{pmatrix}
    \right).
    \]
which better reflects empirical situations in which the two potential outcomes are not independent but share substantial underlying information.
\item \textbf{Twins (real-world):}
We use the U.S. twin birth records (1989–1991) described in \cite{Twins_dataset}, restricted to same-sex twins with both birth weights below $2\,\mathrm{kg}$. Each pair comes with detailed perinatal covariates, including maternal risk factors, prenatal care indicators, and demographic information. In this context, twins are viewed as natural counterfactuals for one another, so the potential outcomes can be conceptually “observed” by comparing mortality for the heavier twin ($T=1$) and the lighter twin ($T=0$) within the same pair. The outcome variable is one-year mortality. In our analysis, we work with a balanced sample containing a moderate number of individuals and a small set of covariates, obtained after standard preprocessing.

\end{itemize}

\subsection{Interval Scores Results: Recommending \texorpdfstring{$C_\rho$}{C_rho}
for \texorpdfstring{$\rho \le 0.5$}{rho<=0.5} and
\texorpdfstring{$C_\rho^{+CI}$}{C_rho+CI} for
\texorpdfstring{$\rho > 0.5$}{rho>0.5}}

\label{appendix_additional_results}

Figures~\ref{fig_IS_heatwave} and~\ref{fig_IS_heatwave_without_CI} report the Interval Scores (IS) of the competing methods across all datasets considered in our experiments. The Interval Score jointly evaluates interval width and coverage, with lower values indicating more efficient and reliable prediction intervals. While GANITE is excluded from these comparisons because it does not provide prediction intervals out of the box, one could imagine extending it with Bayesian or conformalized post-processing layers to quantify uncertainty. For instance, sampling-based approaches could be added to its adversarial generator, or conformal calibration could be applied on top of GANITE outputs. However, such adaptations are not standard, and we therefore omit GANITE from the interval score plots.

\paragraph{Results.} 
When using the bias-corrected $C_\rho^{+CI}$ variant, our method achieves consistently strong results, typically outperforming all baselines across datasets. The only exception is when $\rho=0$, in which case Direct Outcome (DO) estimators attain nearly identical performance. The main drawback of $C_\rho^{+CI}$ lies in its computational cost, since constructing bootstrap confidence intervals is substantially more demanding than computing $C_\rho$. Moreover, when $\rho$ is large, estimation error in $\hat{\mu}_\rho$ can induce bias, leading to undercoverage and consequently poor Interval Scores. In practice, we therefore recommend using the uncorrected $C_\rho$ intervals when $\rho \leq 0.5$, while for $\rho > 0.5$ the bias-corrected $C_\rho^{+CI}$ intervals are preferable, as they yield the greatest empirical gains.  
\(
\textbf{Recommendation: } \boldsymbol{C_\rho} \textbf{ is satisfactory if } \boldsymbol{\rho \leq 0.5} \textbf{, and ideally use } \boldsymbol{C_\rho^{+CI}} \textbf{ if } \boldsymbol{\rho > 0.5}.
\)

\newpage
\begin{figure}[H]
    \vspace{-1cm}
    \centering
    \includegraphics[width=1\linewidth]{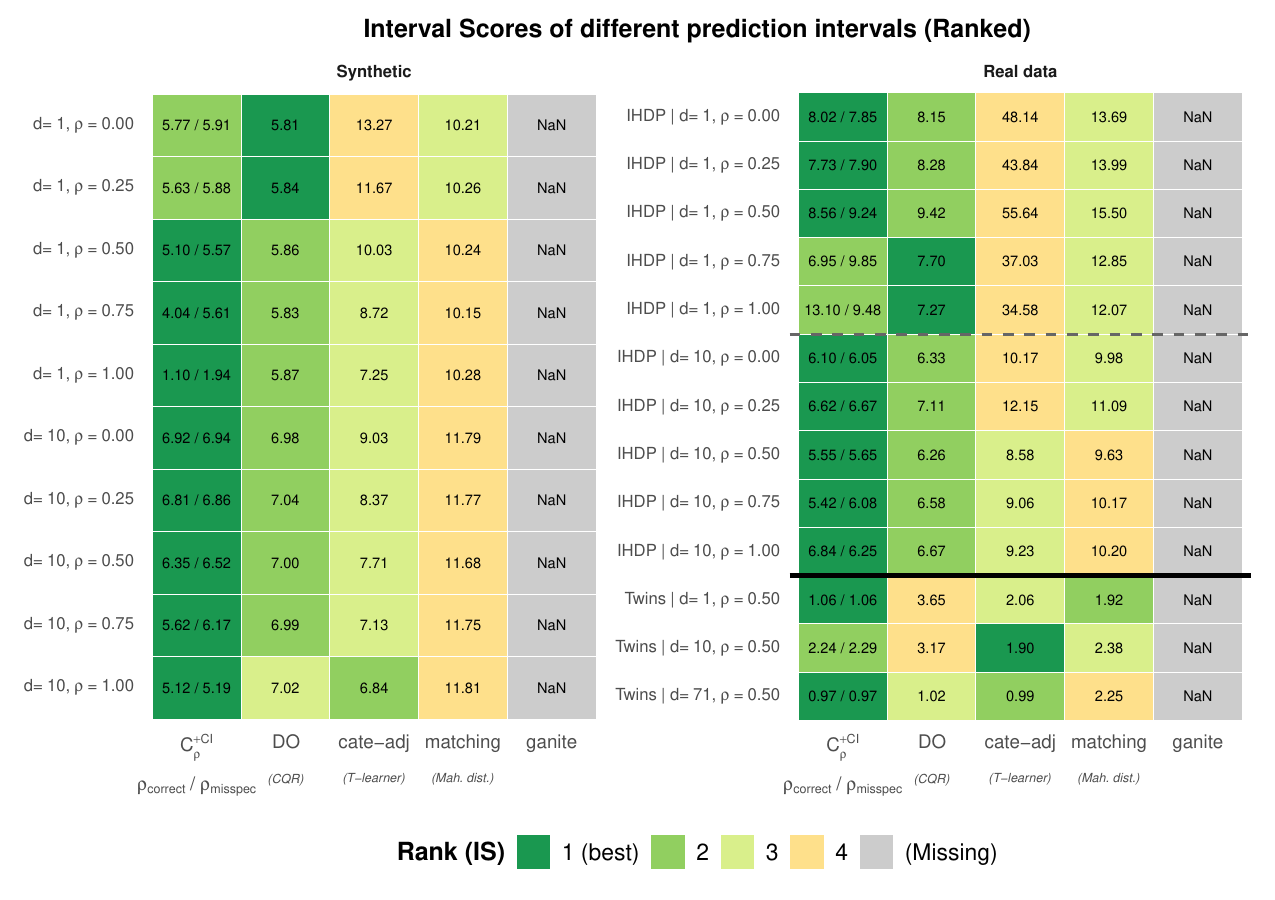}
    \caption{Interval Scores of different prediction interval methods across all datasets. Here, $C_\rho^{+CI}$, the bias-corrected version of $C_\rho$ introduced in Section~\ref{subsection_C_with_CI}, is used. GANITE is excluded since it does not provide a natural way of constructing prediction intervals.}
    \label{fig_IS_heatwave}
\end{figure}

\begin{figure}[H]
    \vspace{-2cm}
    \centering
    \includegraphics[width=1\linewidth]{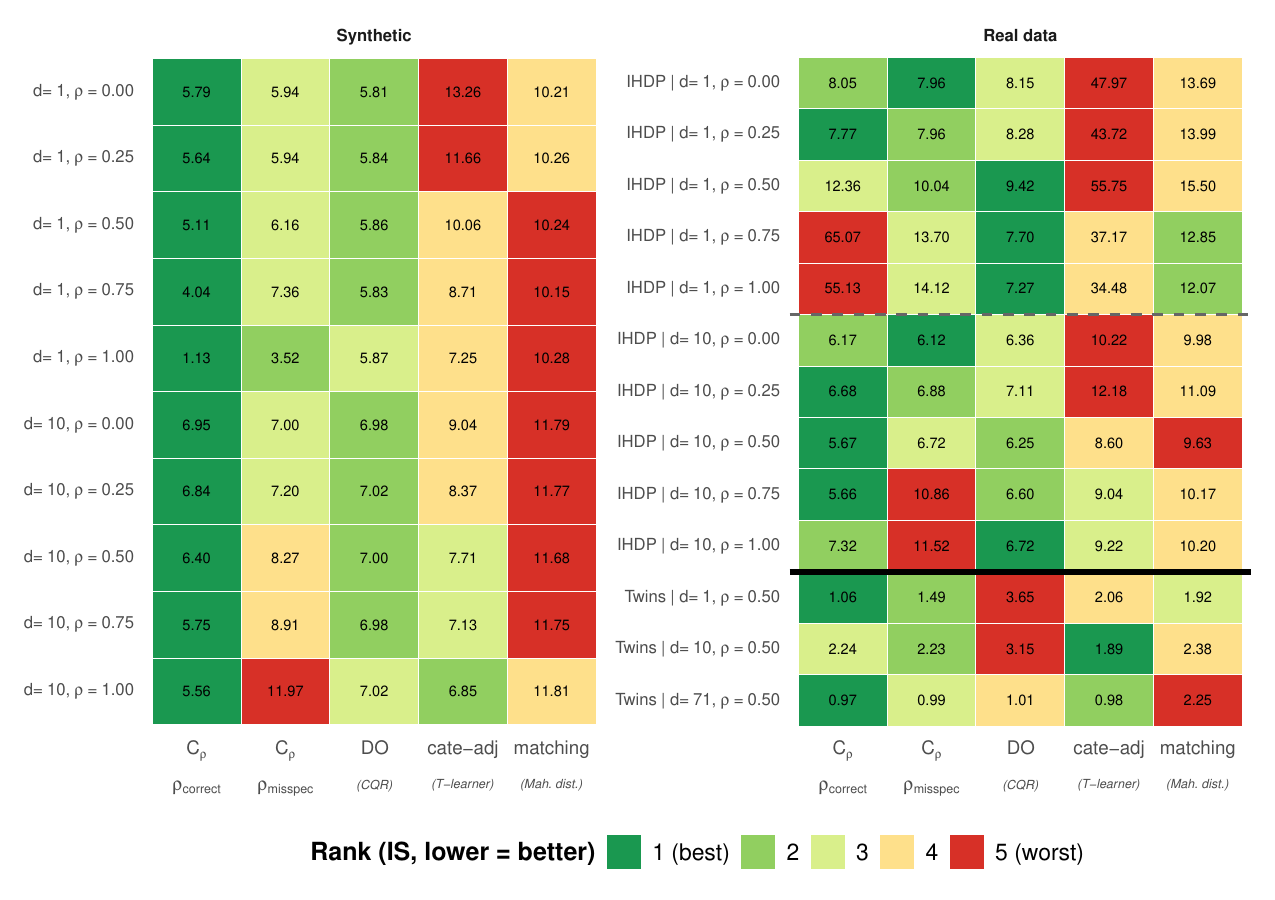}
    \caption{Interval Scores of different prediction interval methods across all datasets. Here, the uncorrected $C_\rho$ intervals, as defined in Section~\ref{section_3}, are used.}
    \label{fig_IS_heatwave_without_CI}
\end{figure}

\begin{figure}[H]
    \centering
    \includegraphics[width=1\linewidth]{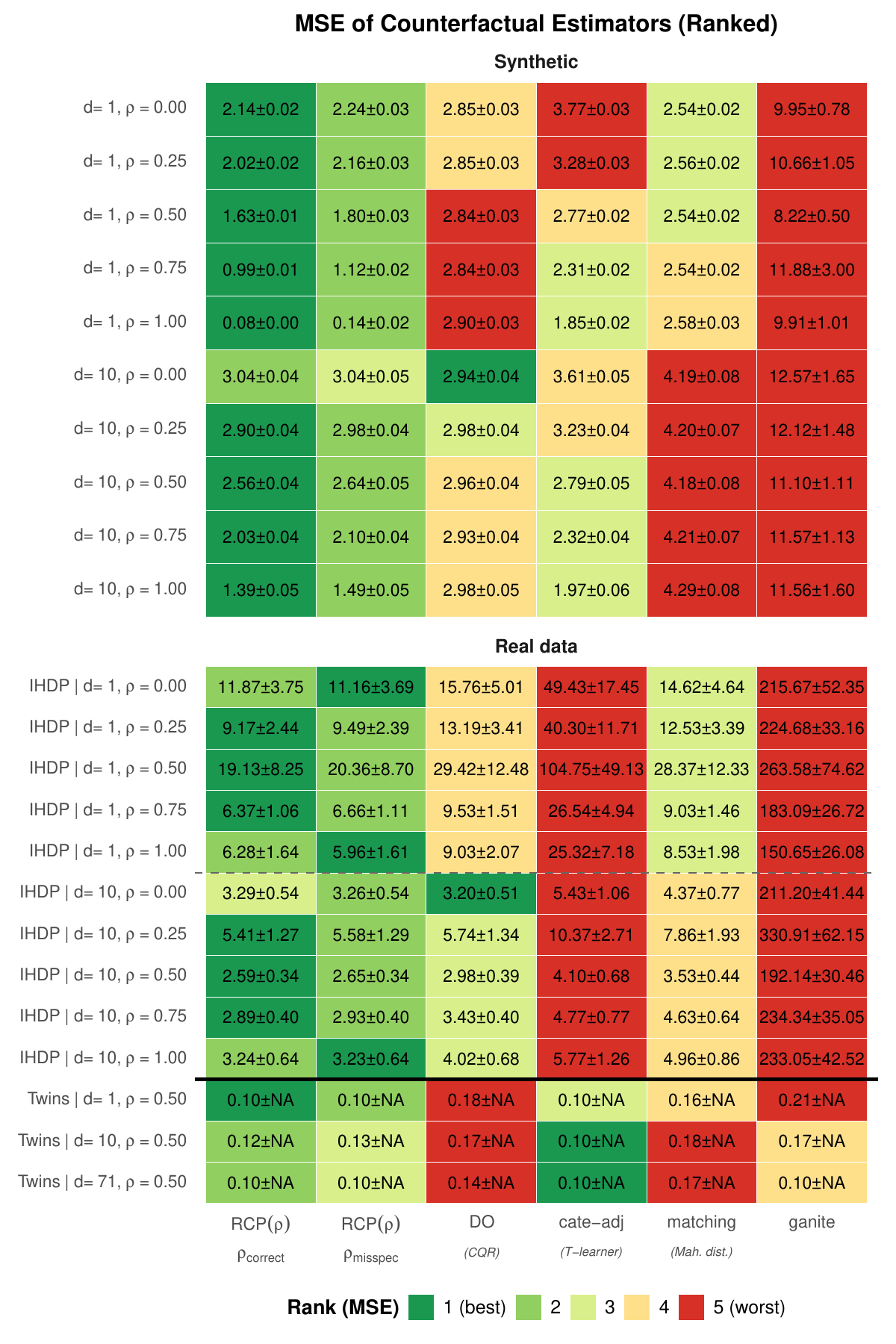}
\caption{Extended version of Figure~\ref{fig_MSE_heat}, additionally displaying the standard deviations of the MSE estimates within each cell. }
    \label{fig_MSE_with_sd}
\end{figure}

\newpage
\section{Proofs}\label{section_proofs_appendix}
\begin{lemma}\label{motivation_lemma}
Let $x\in\mathcal{X}$ and suppose $\big(Y(1),Y(0)\big)\mid X=x $ is Gaussian with $\rho = \mathrm{cor}\big(Y(1),Y(0)\mid X=x\big)\in[-1,1]$.

Consider the asymptotic scenario: $\hat{\mu}_t(x) =  \mu_t(x)$ and suppose that we found conditionally valid prediction intervals:
\begin{align*}
 &\mathbb{P}\big(Y(t) \leq \hat{\mu}_t(x) + u_t(x)\mid X=x\big)  = 0.95, \quad  \mathbb{P}\big(Y(t) \geq \hat{\mu}_t(x) - l_t(x)  \mid X=x\big)=0.95, \,\,\,\, t=0,1.
\end{align*}
Then,  $C_\rho$ prediction intervals from Definition~\ref{def_C_rho} are optimal in a sense that it is the smallest set satisfying:
\begin{align*}
 \mathbb{P}\big(Y(1)\in C_{\rho}(X, Y(0))\mid X=x, Y(0)=y\big)\geq 0.9,
\end{align*}
for any $y\in\mathbb{R}$. Moreover, \( \hat{\mu}_\rho(x, y) \) is the optimal point predictor in the sense that it minimizes the mean squared error:
\[
\hat{\mu}_\rho(x, y) = \operatorname*{argmin}_{c\in\mathbb{R}}\, \mathbb{E}\big[(Y(1) - c)^2 \mid  X=x, Y(0)=y\big].
\]
\end{lemma}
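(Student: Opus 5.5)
Since the conditional law of $(Y(1),Y(0))$ given $X=x$ is bivariate Gaussian, the plan is to compute the conditional law of $Y(1)$ given $Y(0)=y$ explicitly and then match it term by term with Definition~\ref{def_C_rho}, for the case $t=0$. Write $\sigma_t^2=\operatorname{var}(Y(t)\mid X=x)$ and let $z=\Phi^{-1}(0.95)$.

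\textbf{Step 1: identify the baseline widths.} Gaussianity of the marginals, together with $\hat\mu_t(x)=\mu_t(x)$ and the two exact $0.95$ conditions, forces $u_t(x)=l_t(x)=z\sigma_t$. Hence
\[
\lambda(x)=\frac{2z\sigma_1}{2z\sigma_0}=\frac{\sigma_1}{\sigma_0},
\]
and $\hat\mu_\rho(x,y)=\mu_1(x)+\rho\frac{\sigma_1}{\sigma_0}\bigl(y-\mu_0(x)\bigr)$.

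\textbf{Step 2: the conditional law.} The standard bivariate normal conditioning formula gives
\[
Y(1)\mid X=x,\,Y(0)=y\;\sim\;\mathcal N\!\Bigl(\mu_1(x)+\rho\tfrac{\sigma_1}{\sigma_0}\bigl(y-\mu_0(x)\bigr),\;(1-\rho^2)\sigma_1^2\Bigr).
\]
The mean coincides with $\hat\mu_\rho(x,y)$. Because the conditional mean minimizes conditional squared loss, the point-predictor claim follows at once.

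\textbf{Step 3: the interval.} $C_\rho(x,y)$ is $\hat\mu_\rho(x,y)\pm\sqrt{1-\rho^2}\,z\sigma_1$, which is exactly the conditional mean $\pm z$ conditional standard deviations. Its conditional coverage is therefore $\Phi(z)-\Phi(-z)=0.9$.

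For optimality, the conditional density is Gaussian, so it is unimodal and symmetric. By the standard highest-density-region argument, the shortest set (in Lebesgue measure) with probability at least $0.9$ is the symmetric interval around the mean. One can also see this directly: for an interval $[a,b]$ with fixed mass, minimizing the length $b-a$ under the constraint $\Phi(\cdot)$ yields equal densities at $a$ and $b$, hence symmetry.

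\textbf{Degenerate case.} When $\rho=\pm1$ or $\sigma_t=0$, the conditional law is a point mass. Then $C_\rho$ reduces to the singleton $\{\hat\mu_\rho(x,y)\}$ and has coverage $1$; I would treat this case separately, consistent with \eqref{eq34}.

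\textbf{Main obstacle.} The only delicate step is Step 1: deducing symmetric widths equal to $z\sigma_t$ relies on the exact-equality conditions and on Gaussianity of the marginals. Optimality also needs to be stated carefully as minimal Lebesgue measure among all measurable sets, not just among intervals. I would handle this with the bathtub/Neyman–Pearson argument: for any set $A$ with $P(A)\ge0.9$, compare $A$ with the level set $\{f\ge f(\text{endpoint})\}$.
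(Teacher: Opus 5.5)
Your proposal is correct and follows the paper's proof essentially step for step. Setting $u_t=l_t=z\sigma_t$ gives $\lambda=\sigma_1/\sigma_0$. The bivariate-normal conditioning formula then shows that $\hat\mu_\rho(x,y)$ is the conditional mean, hence MSE-optimal, and that $C_\rho$ is the conditional mean $\pm z$ conditional standard deviations. Optimality follows from the symmetry and unimodality of the Gaussian.

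Your highest-density-region argument over all measurable sets simply makes rigorous the paper's one-line ``symmetry + continuity'' remark. One small correction: the case $\sigma_t=0$ is already excluded by the exact $0.95$ hypotheses, since a degenerate $Y(t)$ gives probabilities in $\{0,1\}$. So the only degenerate case to treat separately is $\rho=\pm1$.
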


\begin{proof}
We use the following fact:

\begin{tcolorbox}[colback=gray!5!white, colframe=gray!20!white, boxrule=0.2pt, left=1pt, right=1pt]
\begin{flushleft}
For a bivariate Gaussian random variables $(Z_1, Z_0)$:
\[
\begin{pmatrix} Z_0 \\ Z_1 \end{pmatrix} \sim \mathcal{N}\left( \begin{pmatrix} \mu_0 \\ \mu_1 \end{pmatrix}, \begin{pmatrix} \sigma_0^2 & \rho \sigma_0 \sigma_1 \\ \rho \sigma_0 \sigma_1 & \sigma_1^2 \end{pmatrix} \right),
\]
it is well known that:
\[
Z_1 \mid Z_0=z \sim \mathcal{N}\left( \mu_1 + \rho \frac{\sigma_1}{\sigma_0}(z - \mu_0),\; \sigma_1^2 (1 - \rho^2) \right).
\]
Moreover, the shortest prediction interval with a given coverage is symmetric around the mean. 
\end{flushleft}
\end{tcolorbox}

First, we introduce some notation:
\begin{itemize}
    \item Let \( c := \Phi^{-1}(0.95) \approx 1.6449 \) denote the 0.95 quantile of a standard Gaussian random variable.
        \item Let \( \sigma_t^2(x) := \operatorname{Var}(Y(t) \mid X = x) \) denote the conditional variance.
    \item \(\mu_t(x)+ u_t(x) = \operatorname{Quantile}_{0.95}(Y(t) \mid X = x) \).
    \item Since \( Y(t) \mid X = x \) is symmetrical around the mean, we have \( l_t(x) = u_t(x) \). Therefore, \( u_t(x) = c \cdot \sigma_t(x) \), by the standard form of the quantile function for a Gaussian distribution. Therefore, $\lambda(x) = \frac{\sigma_1(x)}{\sigma_0(x)}$.
\end{itemize}

Due to Gaussianity assumption, it holds that:
\[
Y(1) \mid Y(0)=y,X = x \sim \mathcal{N}\left( \mu_1(x) + \rho \frac{\sigma_1(x)}{\sigma_0(x)}(y - \mu_0(x)),\;  (1 - \rho^2)\sigma_1^2(x) \right)
\]
which directly gives us 
\begin{align*}
\mathbb{P}(Y(1) \leq\mu_1(x) + \rho \frac{\sigma_1(x)}{\sigma_0(x)}(y_0 - \mu_0(x)) + \sqrt{1 - \rho^2} \cdot c \cdot\sigma_1(x) \mid X=x, Y(0)=y_0) = 0.95.
\end{align*}
Using our notation and previously established results, we get 
\begin{align*}
\mathbb{P}(Y(1) \leq\hat{\mu}_\rho(x, y_0) + \sqrt{1 - \rho^2} \cdot u_1(x) \mid X=x, Y(0)=y_0) = 0.95,
\end{align*}
and analogously
\begin{align*}
\mathbb{P}(Y(1) \geq\hat{\mu}_\rho(x, y_0) - \sqrt{1 - \rho^2} \cdot l_1(x) \mid X=x, Y(0)=y_0) = 0.95.
\end{align*}
Hence, we proved that \begin{align*}
 \mathbb{P}\big(Y(1)\in C_{\rho}(X, Y(0))\mid X=x, Y(0)=y_0\big)= 0.9.
\end{align*}
The fact that \( C_\rho \) prediction interval is the smallest possible interval achieving the desired coverage follows directly from symmetry+continuity of Gaussian variable. The fact that \( \hat{\mu}_\rho(x, y) \) is the optimal point predictor follows directly since
\[
\hat{\mu}_\rho(x, y) = \mathbb{E}\big[Y(1) \mid  X=x, Y(0)=y\big].
\]
\end{proof}

\begin{customthm}{\ref{theorem_consistency}}
Let $x\in\mathcal{X}$ and suppose $\big(Y(1),Y(0)\big)\mid X=x $ is Gaussian with $\rho = \mathrm{cor}\big(Y(1),Y(0)\mid X=x\big)\in[-1,1]$.

Let $\hat{\mu}_t(x)$ be consistent estimators of $\mu_t(x)$, and assume the prediction interval widths $l_t(x),u_t(x)$ are asymptotically conditionally valid, i.e.,
\[
\lim_{n\to\infty}\mathbb{P}\!\big(Y(t)\le \hat{\mu}_t(x)+u_t(x)\mid X=x\big)=0.95,
\qquad 
\lim_{n\to\infty}\mathbb{P}\!\big(Y(t)\ge \hat{\mu}_t(x)-l_t(x)\mid X=x\big)=0.95,
\]
for $t=0,1$. Then, for any fixed $y\in\mathbb{R}$:

\begin{enumerate}
\item $\hat{\mu}_\rho(x,y)$ is a consistent estimator of the conditional mean,
\[
\hat{\mu}_\rho(x,y)\xrightarrow{p} \mathbb{E}\big[Y(1)\mid X=x,Y(0)=y\big], \quad \text{as} \,\,n\to\infty.
\]

\item The $C_\rho$ prediction intervals achieve asymptotic conditional coverage,
\[
\lim_{n\to\infty}\mathbb{P}\!\big(Y(1)\in C_\rho(X,Y(0))\mid X=x,Y(0)=y\big)=0.9.
\]
\item  The $C_\rho$ prediction intervals are \textbf{asymptotically} \textbf{optimal} in a sense that, in the limit, $C_\rho$ is the smallest set satisfying (\ref{eq4256}).  
\end{enumerate}
\end{customthm}

\begin{proof}
Under the Gaussian assumption, Lemma~\ref{motivation_lemma} implies
\begin{equation}
\label{eq:conditional_mean_true}
\mathbb{E}\big[Y(1)\mid X=x,Y(0)=y\big]
= \mu_1(x)+\rho\,\frac{\sigma_1(x)}{\sigma_0(x)}\big(y-\mu_0(x)\big).
\end{equation}

By consistency, $\hat{\mu}_t(x)\xrightarrow{p}\mu_t(x)$ for $t=0,1$. Moreover, since the upper and lower bounds converge to the $0.95$ and $0.05$ conditional quantiles of $Y(t)\mid X=x$, their total width satisfies
\[
l_t(x)+u_t(x)\xrightarrow{p}
\operatorname{Quantile}_{0.95}(Y(t)\mid X=x)-\operatorname{Quantile}_{0.05}(Y(t)\mid X=x)
= 2z_{0.95}\sigma_t(x).
\]
Thus,
\[
\lambda(x)=\frac{l_1(x)+u_1(x)}{l_0(x)+u_0(x)} \xrightarrow{p} \frac{\sigma_1(x)}{\sigma_0(x)}.
\]

Substituting into $\hat{\mu}_\rho(x,y)$,
\[
\hat{\mu}_\rho(x,y)
\xrightarrow{p}
\mu_1(x)+\rho\,\frac{\sigma_1(x)}{\sigma_0(x)}\big(y-\mu_0(x)\big),
\]
which coincides with \eqref{eq:conditional_mean_true}, proving consistency of the point estimator.

For the prediction interval $C_\rho$, Lemma~\ref{motivation_lemma} further states that, under Gaussianity, $C_\rho(X,Y(0))$ is the minimal set achieving $90\%$ conditional coverage for $Y(1)\mid X=x,Y(0)=y$. Since $l_t(x)$ and $u_t(x)$ converge to their true quantiles, the constructed interval converges to this optimal set. Hence,
\[
\lim_{n\to\infty}\mathbb{P}\!\big(Y(1)\in C_\rho(X,Y(0))\mid X=x,Y(0)=y\big)=0.9,
\]
and the optimality then follows trivially from Lemma~\ref{motivation_lemma}. 
\end{proof}

\begin{customlem}{\ref{lemma_nonidentifiability}}
Consider two distributions $P$ and $P'$ satisfying strong ignorability and overlap, that share the same marginal laws of $Y(0)\mid X$ and $Y(1)\mid X$ but differ in their
cross-world dependence structure; that is,  $\rho(x)\neq \rho'(x)$. 

Then $P$ and $P'$ induce the same observable distribution of $(X,T,Y)$,
yet their cross-world regressions can differ:
\[
\mathbb{E}_{P}\!\left[Y(1)\mid X=x,Y(0)=y\right]
\neq
\mathbb{E}_{P'}\!\left[Y(1)\mid X=x,Y(0)=y\right]
\]
for some $(x,y)$. Therefore $\mu(x,y)$ is not identified from the observed data and its
value depends on the unobserved cross-world correlation $\rho(x)$.
\end{customlem}

\begin{proof}
To see that the cross-world regression can differ, consider the following example: Fix $x$ and assume that, under $P$, the pair
$(Y(0),Y(1))\mid X=x$ is bivariate normal with mean vector
$(\mu_0(x),\mu_1(x))$ and covariance matrix
\[
\Sigma_{\rho(x)}(x)
=
\begin{pmatrix}
\sigma_0^2(x) & \rho(x)\sigma_0(x)\sigma_1(x) \\
\rho(x)\sigma_0(x)\sigma_1(x) & \sigma_1^2(x)
\end{pmatrix},
\]
where $\sigma_0(x),\sigma_1(x)>0$ and $\rho(x)\in(-1,1)$.
Define $P'$ analogously, but with correlation $\rho'(x)\neq \rho(x)$,
keeping the same means and variances. Then for both $P$ and $P'$ we have
\[
Y(0)\mid X=x \sim \mathcal{N}(\mu_0(x),\sigma_0^2(x)),
\qquad
Y(1)\mid X=x \sim \mathcal{N}(\mu_1(x),\sigma_1^2(x)),
\]
so the marginals $Y(0)\mid X$ and $Y(1)\mid X$ coincide under $P$ and $P'$, 
and hence the observable distribution of $(X,T,Y)$ is the same.

However, for a bivariate normal vector, the conditional expectation is
\[
\mathbb{E}_{P}\!\left[Y(1)\mid X=x,Y(0)=y\right]
=
\mu_1(x) 
+ \rho(x)\,\frac{\sigma_1(x)}{\sigma_0(x)}\bigl(y-\mu_0(x)\bigr),
\]
and similarly
\[
\mathbb{E}_{P'}\!\left[Y(1)\mid X=x,Y(0)=y\right]
=
\mu_1(x) 
+ \rho'(x)\,\frac{\sigma_1(x)}{\sigma_0(x)}\bigl(y-\mu_0(x)\bigr).
\]
If $\rho(x)\neq \rho'(x)$ and $\sigma_0(x),\sigma_1(x)>0$, these two
quantities differ for all $y\neq \mu_0(x)$. Since $Y(0)\mid X=x$ is
non-degenerate and has a continuous density, the event
$\{Y(0)\neq \mu_0(x)\}$ has positive probability, so there exists a set
of $(x,y)$ with positive probability such that
\[
\mathbb{E}_{P}\!\left[Y(1)\mid X=x,Y(0)=y\right]
\neq
\mathbb{E}_{P'}\!\left[Y(1)\mid X=x,Y(0)=y\right].
\]
Thus $P$ and $P'$ generate the same observed data but different
cross-world regressions, which proves that $\mu(x,y)$ is not identified 
from the distribution of $(X,T,Y)$ and depends on the unobserved 
cross-world correlation $\rho(x)$.
\end{proof}

\begin{lemma}[Special cases of \(\rho\)]  
\begin{itemize}
    \item If \(\rho = 0\) and \(\tilde C_1(X)\) is marginally valid, then \(C_\rho(X, Y(0))\) is also marginally valid:
\[
\mathbb{P}(Y(1) \in \tilde C_1(X)) \geq 0.9 \implies \mathbb{P}(Y(1) \in C_\rho(X, Y(0))) \geq 0.9. 
\]
If additionally $Y(0)\indep Y(1)\mid X=x$  and \(\tilde C_1(X)\) is conditionally valid, then \(C_\rho(X, Y(0))\) is also conditionally valid; that is,
\[
\mathbb{P}(Y(1) \in \tilde C_1(X) \mid X = x) \geq 0.9 \implies  \mathbb{P}(Y(1) \in C_\rho(X, Y(0)) \mid X = x, Y(0) = y) \geq 0.9, 
\]for any $x\in\mathcal{X}, y\in\mathcal{Y}$.
\item If \(\rho = \pm 1\) and \(\mu(x, y_0) = \hat{\mu}(x, y_0)\), then
\[
\mathbb{P}(Y(1) \in C_\rho(X, Y(0)) \mid X = x, Y(0) = y) = 1.
\]
If we have confidence intervals satisfying  \(\mathbb{P}(\mu(x, y_0) \in \hat{\mu}(x, y_0) \pm r(x, y_0)) = 1 - \beta\), then
\[
\mathbb{P}(Y(1) \in C^{+CI}_\rho(X, Y(0)) \mid X = x, Y(0) = y) = 1 - \beta.
\]
\end{itemize}
\end{lemma}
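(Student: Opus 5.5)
The statement has four claims, and each follows by plugging the special value of $\rho$ into Definition~\ref{def_C_rho}. In every case the interval $C_\rho$ collapses to a simple object, and the coverage statement then reduces to something already assumed.

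\emph{Case $\rho=0$.} The definition gives $\hat\mu_\rho(x,y)=\hat\mu_1(x)$ and $\sqrt{1-\rho^2}=1$. Hence $C_\rho(x,y)=[\hat\mu_1(x)-l_1(x),\hat\mu_1(x)+u_1(x)]=\tilde C_1(x)$ for every $y$.
\begin{itemize}
\item \emph{Marginal validity.} The events $\{Y(1)\in C_\rho(X,Y(0))\}$ and $\{Y(1)\in\tilde C_1(X)\}$ coincide, so the implication is immediate.
\item \emph{Conditional validity.} The interval $\tilde C_1(x)$ is fixed given $X=x$; if it is built from a separate training sample, it is also independent of the test unit. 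Therefore
\[
\mathbb{P}(Y(1)\in\tilde C_1(x)\mid X=x,Y(0)=y)=\mathbb{P}(Y(1)\in\tilde C_1(x)\mid X=x),
\]
using the assumption $Y(1)\indep Y(0)\mid X=x$. The latter probability is at least $0.9$ by hypothesis.
\end{itemize}
The one point needing care is conditioning on a measure-zero event $\{Y(0)=y\}$. I would handle it by working with regular conditional distributions: under conditional independence, the law of $Y(1)$ given $(X,Y(0))=(x,y)$ can be taken equal to its law given $X=x$.

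\emph{Case $\rho=\pm1$.} Here $\sqrt{1-\rho^2}=0$, so $C_\rho(x,y)=\{\hat\mu_\rho(x,y)\}$. With $|\rho|=1$, the conditional law of $Y(1)$ given $X=x$ and $Y(0)=y$ is degenerate. For instance, in the Gaussian setting of Lemma~\ref{motivation_lemma} the conditional variance is $(1-\rho^2)\sigma_1^2=0$; more generally, perfect correlation forces an almost surely affine relation between $Y(1)$ and $Y(0)$ given $X=x$. So given $(x,y)$, $Y(1)=\mu(x,y)$ almost surely.
\begin{itemize}
\item \emph{Exact estimator.} If $\hat\mu(x,y)=\mu(x,y)$, the singleton contains $Y(1)$ with probability one.
\item \emph{Estimated mean.} With $c=\rho^2=1$, the corrected interval is $C^{+CI}_\rho(x,y)=[\hat\mu_\rho-r_l,\hat\mu_\rho+r_u]=CI(x,y)$. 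Since $Y(1)=\mu(x,y)$ almost surely, the event $\{Y(1)\in C^{+CI}_\rho\}$ equals $\{\mu(x,y)\in CI(x,y)\}$ up to a null set. Its probability is $1-\beta$ by assumption, with the randomness coming from the data used to build the confidence interval.
\end{itemize}

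The main obstacle I expect is justifying the degeneracy of $Y(1)$ given $(X,Y(0))$ when $|\rho|=1$ without Gaussianity. My first attempt would use the equality case of Cauchy--Schwarz, which yields $Y(1)-\mu_1(x)=\pm\frac{\sigma_1}{\sigma_0}(Y(0)-\mu_0(x))$ almost surely given $X=x$. The second point needing care is the measure-theoretic meaning of conditioning on $Y(0)=y$, which I would again handle through regular conditional distributions as above.
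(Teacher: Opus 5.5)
Your proposal is correct and follows essentially the same route as the paper: for $\rho=0$ the definition reduces $C_\rho$ to $\tilde C_1$, and conditional independence carries over conditional validity. For $\rho=\pm1$, the affine (degenerate) relation between $Y(1)$ and $Y(0)$ given $X=x$ means coverage is governed by the singleton or, with $c=\rho^2=1$, by the confidence interval itself. Your extra care (the Cauchy--Schwarz equality case, regular conditional distributions, and independence of the fitted intervals from the test unit) just makes explicit what the paper's short proof leaves implicit.
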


\begin{proof}
\textbf{Case \(\rho = 0\)}: By definition, \(C_\rho(X, Y(0)) = \tilde C_1(X)\), so marginal validity is preserved.  
If \(Y(0) \indep Y(1) \mid X\), then conditioning on \(Y(0)\) does not affect the validity, hence conditional validity also holds.

\textbf{Case \(\rho = \pm 1\)}: Perfect (anti-)correlation implies a deterministic linear relationship: for fixed \(X = x\), we have
\[
Y(1) = a_x + b_x Y(0) \quad \text{for some } a_x, b_x \in \mathbb{R}.
\]
Thus,
\[
\text{Var}(Y(1) \mid X = x, Y(0) = y) = 0 \quad \Rightarrow \quad \mathbb{P}(Y(1) = \mu(x, y) \mid X = x, Y(0) = y) = 1.
\]
If \(\mu(x, y) = \hat{\mu}(x, y)\), then \(C_\rho(x, y) = \{\mu(x, y)\}\), implying perfect coverage.  
If instead \(\mu(x, y)\) lies in a confidence interval with coverage \(1 - \beta\), then
\[
\mathbb{P}(Y(1) \in C^{+CI}_\rho(x, y) \mid X = x, Y(0) = y) \geq 1 - \beta.
\]
\end{proof}

\end{document}